\newtheorem{theorem}{Theorem}
\newtheorem{result}{Result}
\newtheorem{observation}{Observation}
\newtheorem{definition}{Definition}
\renewcommand*{\theobservation}{\Alph{observation}}
\newtheorem{corollary}{Corollary}
\newcounter{claimcounter}
\definecolor{Gray}{gray}{0.5}
\newcolumntype{a}{>{\columncolor{Gray}}c}
\newenvironment{claim}{\stepcounter{claimcounter} \medskip\noindent\textit{Claim \theclaimcounter. }\itshape}
\newtheorem{lemma}{Lemma}
\newcommand{\com}[1]{}
\title{Approximating Minimum Dominating Set on String Graphs}
\date{}
\author{Dibyayan Chakraborty\thanks{Indian Statistical Institute, Kolkata, India. E-mail: \texttt{dibyayancg@gmail.com}.}\and Sandip Das \thanks{Indian Statistical Institute, Kolkata, India. E-mail: \texttt{sandipdas69@isical.ac.in}.}\and Joydeep Mukherjee\thanks{Indian Statistical Institute, Kolkata, India. E-mail: \texttt{joydeep.m1981@gmail.com}. Partially supported by DST SERB NPDF fellowship (PDF/2016/001647).} }
\begin{document}

\maketitle

\begin{abstract}
	In this paper, we give approximation algorithms for the \textsc{Minimum Dominating Set (MDS)} problem on \emph{string} graphs and its subclasses. A \emph{path} is a simple curve made up of alternating horizontal and vertical line segments. A \emph{$k$-bend path} is a path made up of at most $k + 1$ line segments. An \textsc{L}-path is a $1$-bend path having the shape `\textsc{L}'. A \emph{vertically-stabbed-\textsc{L} graph} is an intersection graph of \textsc{L}-paths intersecting a common vertical line. We give a polynomial time $8$-approximation algorithm for \textsc{MDS} problem on vertically-stabbed-\textsc{L} graphs whose APX-hardness was shown by Bandyapadhyay et al. (\textsc{MFCS}, 2018). To prove the above result, we needed to study the \emph{Stabbing segments with rays} (\textsc{SSR}) problem introduced by Katz et al. (\textsc{Comput. Geom. 2005}). In the \textsc{SSR} problem, the input is a set of (disjoint) leftward-directed rays, and a set of (disjoint) vertical segments. The objective is to select a minimum number of rays that intersect all vertical segments. We give a $O((n+m)\log (n+m))$-time $2$-approximation algorithm for the \textsc{SSR} problem where $n$ and $m$ are the number of rays and segments in the input. A \emph{unit $k$-bend path} is a $k$-bend path whose segments are of unit length. A graph is a \emph{unit $B_k$-VPG graph} if it is an intersection graph of unit $k$-bend paths. Any string graph is a unit-$B_k$-VPG graph for some finite $k$. Using our result on \textsc{SSR}-problem, we give a polynomial time $O(k^4)$-approximation algorithm for \textsc{MDS} problem on unit $B_k$-VPG graphs for $k\geq 0$.
\end{abstract}

\textbf{Keywords:} Minimum Dominating Set, String graph, $B_k$-VPG graph, Approximation algorithm

\section{Introduction}

A graph $G$ has vertex set $V(G)$ and edge set $E(G)$. A \emph{dominating set} of a graph $G$ is a subset $D\subseteq V(G)$ of vertices such that each vertex in $V(G)\setminus D$ is adjacent to some vertex in $D$. The \textsc{Minimum Dominating Set (MDS)} problem is to find a minimum cardinality dominating set of a graph $G$. A \emph{string representation} of a graph is a collection of simple curves on the plane such that each curve in the collection represents a vertex of the graph and two curves intersect if and only if the vertices they represent are adjacent in the graph. The graphs that have a string representation are called \emph{string} graphs. 

Many important graph families like \emph{planar} graphs, \emph{unit disk} graphs and \emph{chordal graphs} are subclasses of string graphs~\cite{asinowski2012,goncalves2017}. Indeed, Pach and Toth~\cite{pach2003} proved that the number of string graphs on $n$ labelled vertices is at least $2^{\frac{3}{4}{\binom{n}{2}}}$, indicating that many graphs are string graphs. 
This motivates the search for efficient algorithms for solving optimisation problems on string graphs. Fox and Pach~\cite{fox2011} gave for every $\epsilon > 0$, a polynomial time algorithm for computing the \textsc{Maximum Independent set} of $k$-string graphs (intersection graphs of curves on the plane where two curves intersecting at most $k$ times) with approximation ratio at most $n^\epsilon$. While Pawlik et al.~\cite{pawlik2014} proved that triangle-free \emph{segment graphs} (intersection graphs of line segments on the plane) can have arbitrarily high \textsc{Chromatic Number}, Bonnet et al.~\cite{BonnetR18} gave a subexponential algorithm to color string graphs with three colors. In this paper, we study the \textsc{MDS} problem on string graphs and its subclasses.

Since split graphs (graphs whose vertex set can be partitioned into a clique and an independent set) are known to be string graphs, for every $\alpha>0$, it is not possible to approximate the \textsc{MDS} problem on string graphs with $n$ vertices to within $(1-\alpha)\ln n$ unless $NP\subseteq DTIME(n^{O(\log \log n)})$~\cite{chlebik2008}. Hence, researchers have focussed on developing approximation algorithms for the \textsc{MDS} problem on special classes of string graphs. The concepts of \emph{bend number} and \emph{$B_k$-VPG graphs} (introduced by Asinowski et al.~\cite{asinowski2012}) become useful in gaining a better understanding of subclasses of string graphs. A \emph{path} is a simple curve made up of alternating horizontal and vertical line segments. A \emph{$k$-bend path} is a path made up of at most $k + 1$ line segments. A \emph{$B_k$-VPG representation} of a graph is a collection of $k$-bend paths such that each path in the collection represents a vertex of the graph, and two such paths intersect if and only if the vertices they represent are adjacent in the graph. The graphs that have a $B_k$-VPG representation are called \emph{$B_k$-VPG graphs}. A graph is said to be a \emph{VPG graph} if it is a $B_k$-VPG graph for some $k$. Asinowski et al.~\cite{asinowski2012} showed that the family of VPG graphs are equivalent to the family of string graphs. 

Mehrabi~\cite{mehrabi2017} gave an $\epsilon$-net based $O(1)$-approximation algorithm for the MDS problem on \emph{one-string} $B_1$-VPG graphs (graphs with $B_1$-VPG representation where two curves intersect at most once). Bandyapadhyay et al.~\cite{Bandyapadhyay2018} proved APX-hardness for the MDS problem on a special class of $B_1$-VPG graphs, namely \emph{vertically-stabbed-\textsc{L} graph} (defined below) which was originally introduced by McGuinness~\cite{mcguinness1996}. An \textsc{L}-path is a $1$-bend path having the shape `\textsc{L}'. A \emph{vertically-stabbed-\textsc{L}-representation} of a graph is a collection of \textsc{L}-paths and a vertical line such that each path in the collection intersects the vertical line. Each path in the collection represents a vertex of the graph and two paths intersect if and only if the vertices they represent are adjacent in the graph. A graph is a \emph{vertically-stabbed-\textsc{L} graph} if it has a vertically-stabbed-\textsc{L}-representation. Bandyapadhyay et al.~\cite{Bandyapadhyay2018} proved APX-hardness for the MDS problem on vertically-stabbed-\textsc{L} graphs by showing that all \emph{circle} graphs (intersection graphs of chords of a circle) are vertically-stabbed-\textsc{L} graphs. Many researchers have studied the \textsc{MDS} problem on circle graphs~\cite{bousquet2012,damian2006,damian2002}. Since all vertically-stabbed-\textsc{L}-graphs are also one-string $B_1$-VPG graphs, there is a $O(1)$-approximation algorithm for the MDS problem on vertically-stabbed-\textsc{L} graphs (due to Mehrabi~\cite{mehrabi2017}). In this paper, we prove the following theorems.

\begin{theorem}\label{thm:stab-L-dom}
	Given a vertically-stabbed-\textsc{L}-representation of a graph $G$ with $n$ vertices, there is a polynomial time $8$-approximation algorithm to solve the \textsc{MDS} problem on $G$.
\end{theorem}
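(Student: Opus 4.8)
The plan is to reduce the domination problem on $G$ to two stabbing problems of the kind solved by our \textsc{SSR} algorithm. First I would normalise the given representation: after a rigid motion assume the stabbing line $\ell$ is the $y$-axis, so that each vertex $v$ is an \textsc{L}-path with corner $(x_v,b_v)$, a vertical arm $\sigma_v=\{x_v\}\times[b_v,t_v]$ going up from the corner, and a horizontal arm going right from the corner to a point $(r_v,b_v)$, where $x_v\le 0$ and $r_v\ge 0$ (these two inequalities together are exactly what it means for $s_v$ to meet $\ell$). A case check of how the two arms of $s_u$ can meet the two arms of $s_v$ then shows, for all $u\ne v$,
\[
 s_u\cap s_v\ne\emptyset\ \iff\ \big(x_u\le x_v\ \wedge\ b_v\le b_u\le t_v\big)\ \vee\ \big(x_v\le x_u\ \wedge\ b_u\le b_v\le t_u\big),
\]
and both disjuncts hold trivially when $u=v$. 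Setting $A_v=\{u:x_u\le x_v,\ b_v\le b_u\le t_v\}$ and $B_v=\{u:x_v\le x_u,\ b_u\le b_v\le t_u\}$, we get $N[s_v]=A_v\cup B_v$ with $v\in A_v\cap B_v$. The useful feature of this form is that ``$u\in A_v$'' says precisely that the rightward ray $\rho_u=\{(x,b_u):x\ge x_u\}$ stabs the vertical segment $\sigma_v$, whereas ``$u\in B_v$'' says that the rightward ray $\rho_v$ stabs the vertical segment $\sigma_u$.

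The second step is a fractional splitting. Take the LP relaxation of minimum dominating set, $\min\sum_v x_v$ subject to $\sum_{u\in N[s_v]}x_u\ge 1$ and $x\ge 0$, let $x^\star$ be an optimal solution, and put $z^\star=\sum_v x^\star_v\le\mathrm{OPT}$. For each $v$ we have $1\le\sum_{u\in A_v}x^\star_u+\sum_{u\in B_v}x^\star_u$, so one of these partial sums is at least $\tfrac12$; let $\mathcal A$ be the set of $v$ for which $\sum_{u\in A_v}x^\star_u\ge\tfrac12$ and let $\mathcal B=V\setminus\mathcal A$, so $\mathcal A\cup\mathcal B=V$ and $\sum_{u\in B_v}x^\star_u\ge\tfrac12$ for every $v\in\mathcal B$. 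Then $2x^\star$ is a feasible fractional solution, of value $2z^\star$, to the \textsc{SSR} instance with rays $\{\rho_u:u\in V\}$ and segments $\{\sigma_v:v\in\mathcal A\}$ (reflect the plane in a vertical line to make the rays leftward as in the definition of \textsc{SSR}; by a harmless perturbation of the representation that preserves $G$ we may also assume all $x_v$ and all $b_v$ distinct, which makes the $\sigma_v$ pairwise disjoint and the $\rho_u$ pairwise disjoint). Symmetrically, $2x^\star$ is feasible, of value $2z^\star$, for the ``transpose'' instance: select a minimum number of vertical segments from $\{\sigma_u:u\in V\}$ stabbing all the rays $\{\rho_v:v\in\mathcal B\}$.

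Now I would invoke the approximation algorithms. Running the $2$-approximation for \textsc{SSR} on the first instance --- taken so that it returns a solution no larger than twice the LP optimum of the instance handed to it, which is how our algorithm for \textsc{SSR} behaves --- produces a vertex set $D_{\mathcal A}$ with $|D_{\mathcal A}|\le 2\cdot 2z^\star=4z^\star$ whose rays stab every $\sigma_v$ with $v\in\mathcal A$; equivalently, every such $s_v$ is dominated by $D_{\mathcal A}$. Running the analogous $2$-approximation on the transpose instance yields $D_{\mathcal B}$ with $|D_{\mathcal B}|\le 4z^\star$ whose segments are stabbed by every $\rho_v$ with $v\in\mathcal B$, i.e.\ $D_{\mathcal B}$ dominates every such $s_v$. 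Then $D:=D_{\mathcal A}\cup D_{\mathcal B}$ dominates $\mathcal A\cup\mathcal B=V$ and $|D|\le 8z^\star\le 8\,\mathrm{OPT}$; the LP and both stabbing routines run in polynomial time, so the whole algorithm does.

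I expect the main difficulty to be twofold. The smaller part is the geometric bookkeeping behind the adjacency equivalence: \textsc{L}-paths split into those whose corner is strictly left of $\ell$ and those whose vertical arm lies on $\ell$, the equivalence for $s_u\cap s_v$ must be verified in each combination, one must check that truncating a horizontal arm at $\ell$ and replacing it by an infinite ray does not change which vertical arms it stabs, and the perturbation establishing general position must be exhibited so that no adjacency is created or destroyed. The real content lies in the transpose stabbing problem used for $\mathcal B$: the \textsc{SSR} result as stated selects rays to hit segments, not segments to hit rays, so I would need to show that the same near-linear-time algorithm also gives a $2$-approximation --- and an LP-relative one --- in this dual direction; securing that both $2$-approximations can be taken LP-relative is exactly what makes the final constant come out to $8=2\cdot 2+2\cdot 2$.
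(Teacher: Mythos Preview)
Your approach is essentially the paper's: the same LP splitting of $N[v]$ into the two halves $A_v,B_v$, the same reduction of one half to \textsc{SSR} and of the other to its transpose, and the same arithmetic $2\cdot 2+2\cdot 2=8$ via LP-relative $2$-approximations. The paper's sets $H_u,V_u$ coincide (under general position) with your $B_u$ and $A_u\setminus\{u\}$.

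The one point worth noting concerns what you flag as the main difficulty. The transpose problem (select vertical segments to stab all rightward rays) is precisely the \textsc{SRS} problem of Katz, Mitchell and Nir. The paper does \emph{not} reuse the \textsc{SSR} algorithm here; instead it uses the simple greedy of Katz et al.\ (pick the ray with smallest right-endpoint $x$-coordinate, add to the solution the segment through it with highest top and the one with lowest bottom, delete all rays these two hit, repeat) and proves directly, via a short token-counting argument, that this greedy's output is at most twice the LP optimum of the \textsc{SRS} instance. So your plan is right, but securing the LP-relative $2$-approximation for the transpose is easier than you anticipate and is done by a separate, simpler algorithm rather than by adapting the \textsc{SSR} routine.
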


The time complexity of our algorithm for the \textsc{MDS} problem on  vertically-stabbed-\textsc{L} graph is essentially the time required to solve a linear program with $(0,1)$-coefficient matrix optimally. To prove the above theorem we needed to study the \emph{stabbing segment with rays (\textsc{SSR})} problem introduced by Katz et al.~\cite{katz2005}. In the \textsc{SSR} problem, the inputs consist of a set of (disjoint) leftward-directed rays and a set of (disjoint) vertical segments. The objective is to select a minimum number of leftward-directed rays that intersect all vertical segments. Katz et al.~\cite{katz2005} gave a dynamic programming based $O(n^2(n+m))$-time optimal algorithm to solve the \textsc{SSR} problem where $n$ and $m$ are the number of rays and segments in the input instance respectively. We prove the following theorem.

\begin{theorem}\label{thm:SSR}
	There is a $O((n+m)\log (n+m))$-time $2$-approximation algorithm for \textsc{SSR} problem where $n$ and $m$ are the number of rays and segments in the input instance respectively.
\end{theorem}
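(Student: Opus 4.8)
Let me think about this carefully. We have leftward-directed rays (disjoint) and vertical segments (disjoint). A leftward ray at height $y$ starting at $x$-coordinate $a$ covers the region $\{(x, y) : x \le a\}$. A vertical segment with $x$-coordinate $c$ and $y$-range $[b_1, b_2]$ is stabbed by this ray iff $y \in [b_1, b_2]$ and $a \ge c$, i.e., the ray's starting point is to the right of the segment and at a height within the segment's vertical extent.

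So I'd first reformulate. For a segment $s$ with $x$-coordinate $c_s$ and vertical extent $[b_1^s, b_2^s]$, the set of rays that stab it are exactly the rays $r$ (at height $y_r$, starting at $x_r$) with $y_r \in [b_1^s, b_2^s]$ and $x_r \ge c_s$. This is a "dominance"-type condition: think of each ray as a point $(x_r, y_r)$ in the plane, and each segment as requiring a point in the axis-parallel region $[c_s, \infty) \times [b_1^s, b_2^s]$. So SSR is exactly: given a point set (the rays) and a collection of "rightward-infinite horizontal slabs" (the segments), pick fewest points hitting all regions. Since the rays are disjoint, no two rays share a height, so the $y_r$ are all distinct — that's a useful structural fact.

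The natural approach for a 2-approximation is LP rounding, or a combinatorial primal–dual / local-ratio argument. Let me sketch the LP-rounding route since the paper emphasizes "solve a linear program with $(0,1)$-coefficient matrix," suggesting they want integrality-gap-type arguments. The plan is: (1) write the covering LP with variable $x_r \in [0,1]$ per ray, constraint $\sum_{r \text{ stabs } s} x_r \ge 1$ per segment $s$; (2) show the constraint matrix (or a suitable sub-instance) is such that a fractional solution can be rounded losing only a factor 2. Alternatively, and I think more likely given the time bound $O((n+m)\log(n+m))$, they use a \emph{sweep-line / greedy} argument: sort all events by $x$-coordinate, sweep right-to-left (or left-to-right), and when forced to cover a segment, charge cleverly. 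Actually the cleanest idea: process segments in order of increasing $x$-coordinate $c_s$ (hardest to stab first, since they need rays far to the right). Maintain the set of still-uncovered segments. I suspect the key combinatorial lemma is that the "stabbing" relation, restricted appropriately, has a laminar or interval structure in the $y$-coordinate, so that a local-ratio step removes a constant number of segments per selected ray.

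Here is the plan I would actually execute. For a 2-approximation via local ratio: pick the segment $s^*$ with the smallest $x$-coordinate $c_{s^*}$ (any ray stabbing it has $x_r \ge c_{s^*}$, the most restrictive). Among all rays stabbing $s^*$, choose the one — call it $r^*$ — whose starting height $y_{r^*}$... hmm, I need the right tie-break. The intended argument is probably: the rays stabbing $s^*$ have heights inside $[b_1^{s^*}, b_2^{s^*}]$; pick the two "extreme" ones (highest and lowest), or argue that \emph{two} well-chosen rays suffice to stab every segment that \emph{some} ray-stabbing-$s^*$ stabs — no wait, that's false. Let me reconsider: the right statement is that in any optimal (or any feasible) solution, and for the chosen $s^*$, we can identify a ray $r^*$ such that every segment stabbed by $r^*$ that is "above" $r^*$ and every one "below" $r^*$ can be handled; and the weight/count argument shows each ray we buy can be blamed on at most 2 segments in a lower-bound packing. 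The hard part — and where I'd spend the most effort — is proving the structural/packing lemma: that one can find a set of pairwise "non-conflicting" segments (no ray stabs two of them) of size at least $\mathrm{OPT}/2$, i.e., a fractional-matching or LP-duality certificate giving the factor 2; equivalently, bounding the integrality gap of the covering LP by $2$ and showing the rounding is efficient. I expect this to reduce, after the point/slab reformulation, to a statement about stabbing horizontal-slab constraints by points where slabs are "anchored at $+\infty$ on the right," which should admit a clean interval-graph-like argument on the $y$-axis; making that argument run in $O((n+m)\log(n+m))$ time via a single sort plus a sweep is the remaining (routine but careful) engineering step.
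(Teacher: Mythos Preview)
Your geometric reformulation (rays as points $(x_r,y_r)$, segments as right-infinite horizontal slabs $[c_s,\infty)\times[b_1^s,b_2^s]$) is correct and useful, and your instinct that the answer is an LP-integrality-gap-$2$ statement proved via a combinatorial charging scheme is exactly what the paper establishes. But the concrete algorithm you sketch does not match the paper's, and the places where you write ``hmm, I need the right tie-break'' and ``no wait, that's false'' are precisely where the real idea is missing.

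You process \emph{segments} in order of increasing $c_s$ and try to pick one or two ``extreme'' rays stabbing the current segment. This is essentially the KMN idea for the \emph{dual} problem (SRS), and it does not transfer cleanly: the two extreme rays through $s^*$ need not cover what an optimal ray through $s^*$ covers, so there is no obvious local charge, and your proposed packing lemma (find $\mathrm{OPT}/2$ pairwise ray-disjoint segments) is neither stated precisely nor proved. The paper instead processes \emph{rays}: repeatedly take the ray $r$ with smallest right-endpoint $x_r$; if $r$ is \emph{critical} (some still-alive segment is stabbed only by $r$), add $r$ to the solution and delete all segments it stabs; otherwise simply delete $r$. Feasibility is immediate. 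For the factor~$2$, the paper augments this with a \emph{token-passing} bookkeeping: each ray starts holding its own token, and whenever a non-critical ray $r$ is deleted it passes its token to its (at most two) $y$-neighbours among the surviving rays, but only along directions witnessed by a still-alive segment. Two structural lemmas do the work: (i) every ray ends up in at most two tokens (Lemma~1), and (ii) for each ray $r$ placed in the solution there is a ``critical segment'' $\nu_r$ with $N(\nu_r)\subseteq T_r$ (Observation~C). Summing the LP constraint at each $\nu_r$ and using (i) gives $|S|\le 2\cdot\mathrm{OPT}_{\mathrm{LP}}$.

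So the gap in your proposal is the algorithm itself: you never arrive at the ``delete the shortest non-critical ray'' rule, and without it there is no clean charging. Once you have that rule, the token-passing argument (which you did not anticipate) is the non-obvious part of the analysis; the interval-graph intuition you mention is in the right spirit but does not by itself yield the bound.
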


Using our approximation algorithm for the \textsc{SSR} problem, we give approximation algorithms for other subclasses of string graphs as well. In this paper, we introduce the class \emph{Unit-$B_k$-graph} as follows. A \emph{unit $k$-bend path} is a $k$-bend path whose segments are of unit length. A \emph{unit $B_k$-VPG representation} of a graph $G$ is a $B_k$-VPG representation $\mathcal{R}$ of $G$ where all paths in $\mathcal{R}$ are unit $k$-bend paths. Graphs having unit $B_k$-VPG representations are called \emph{unit $B_k$-VPG graphs}. A graph is said to be a \emph{UVPG graph} if it is a unit $B_k$-VPG graph for some $k$. Notice that, every $B_k$-VPG graph has a unit $B_{k'}$-VPG representation for some finite $k'\geq k$. The family of UVPG graphs are equivalent to the family of VPG graphs and therefore equivalent to the family of string graphs. Using our approximation algorithm for the \textsc{SSR} problem, we prove the following theorem.

\begin{theorem}\label{thm:unit-bk-VPG}
	Given a unit $B_k$-VPG representation of a graph $G$ with $n$ vertices, there is a polynomial time $O(k^4)$-approximation algorithm to solve the \textsc{MDS} problem on $G$. 
\end{theorem}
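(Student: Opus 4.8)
The plan is to reduce the MDS problem on a unit $B_k$-VPG graph $G$ to a bounded number of independent subproblems, each of which can be attacked by the $2$-approximation for \textsc{SSR} from Theorem~\ref{thm:SSR} (or by Theorem~\ref{thm:stab-L-dom}), and then combine the solutions losing only a factor polynomial in $k$. The first step is a grid-partitioning argument: fix the given unit $B_k$-VPG representation $\mathcal{R}$ of $G$. Since every path is made of unit-length segments, each path lies within a bounding box of side $O(k)$, so it meets only $O(k)$ columns and $O(k)$ rows of the integer grid. Assign to each path a ``colour'' determined by the residues modulo some $c=\Theta(k)$ of (say) the coordinates of its topmost-leftmost bend; this partitions $V(G)$ into $O(k^2)$ colour classes. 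I would choose $c$ large enough that paths of the \emph{same} colour that are ``far apart'' cannot interact, so that within one colour class the domination structure decomposes into pieces living in disjoint $O(k)\times O(k)$ windows.

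Next, within a single window I would further decompose by the \emph{shape} of the path. A unit $k$-bend path has one of at most $4^{O(k)}$ combinatorial shapes, but the relevant observation is finer: each path contains $O(k)$ segments, and a dominating set for $G$ restricted to a window is dominated by the segments of the chosen paths. So I would split each path into its $\le k+1$ constituent unit segments and think of domination ``segment by segment'': a path $P$ is dominated iff at least one of its segments is hit by a segment of a chosen path. Grouping segments by orientation (horizontal vs.\ vertical) and by which of the $O(k)$ grid lines they lie on, the problem of dominating all the vertical segments lying on one fixed vertical line by horizontal segments crossing it — after extending each horizontal segment of a candidate path to a leftward or rightward ray — is exactly an instance of \textsc{SSR} (or a symmetric variant), to which Theorem~\ref{thm:SSR} applies. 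The key point is that across all the $O(k)$ lines and $O(k^2)$ windows and $O(k^2)$ colours, we solve $O(k^{O(1)})$ many \textsc{SSR} instances; taking the union of the returned ray-sets (i.e.\ the corresponding paths) yields a dominating set, and each \textsc{SSR} instance costs at most a $2$ factor, while the combining over $\mathrm{poly}(k)$ classes costs another $\mathrm{poly}(k)$ factor, for an overall $O(k^4)$.

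To make the combining lose only an additive-to-multiplicative $\mathrm{poly}(k)$ factor rather than blow up, the standard device is: let $D^*$ be an optimum dominating set of $G$; restricted to each class, $D^*$ is a feasible ``stabbing'' solution for that class's subproblem, so the class-optimum is at most $|D^*|$; our algorithm returns at most $2$ times the class-optimum for each of the $O(k^4)$ classes; summing gives $O(k^4)|D^*|$. One must be careful that a single chosen path can dominate vertices of many classes only helps us, and that a path selected to dominate one class's subproblem is still a legitimate vertex to put in the final dominating set — both are immediate. The main obstacle I expect is the geometric bookkeeping in the second step: cleanly phrasing ``dominate these vertical segments by these horizontal segments, where a horizontal segment's reach is a ray'' as a genuine \textsc{SSR} instance, handling the four directions a horizontal segment might need to be extended, dealing with bends (a path contributes several segments that are ``linked''), and ensuring the disjointness hypotheses of \textsc{SSR} (disjoint rays, disjoint vertical segments) can be met after perturbation or after further splitting into $O(1)$ subclasses. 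Getting the precise constant in the exponent to be $4$ rather than something larger is then a matter of being economical about how many of these axes (colour residues, window, line, orientation, extension direction) actually multiply together versus being absorbed into one another.
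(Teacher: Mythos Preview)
Your plan departs substantially from the paper's proof, and the central reduction step has a real gap.

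\textbf{What the paper actually does.} There is no grid-shifting or colouring by coordinate residues. Instead, the paper solves the LP relaxation of MDS and partitions the neighbourhood of each vertex $u$ by the pair $(i,j)\in\{1,\dots,k+1\}^2$ recording \emph{which} segment of $P(u)$ first meets \emph{which} segment of $P(v)$. For each fixed $(i,j)$ this yields a subproblem whose objects are just the $i$-th segments of the vertices to be dominated and the $j$-th segments of the candidate dominators---all of unit length. Unit length is used only to guarantee that the projections form \emph{proper} interval families; the resulting subproblem is the \textsc{Proper-Seg-Dom} problem, for which the paper proves a constant (factor $18$) LP-relative approximation via a further LP split into a proper-interval domination part (LP is integral) and a \textsc{POSS} part. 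The \textsc{POSS} part is where \textsc{SSR} finally enters: one chooses a piercing set of vertical lines so that every horizontal segment crosses exactly one of them, and then in each strip the horizontal segments, \emph{because they all cross a common boundary line}, can legitimately be treated as rays. Summing over the $(k+1)^2$ subproblems and using that each fractional constraint contributes at least $1/(k+1)^2$ to one of them gives the $(k+1)^4$ factor.

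\textbf{The gap in your plan.} Your reduction to \textsc{SSR} is the step that breaks. You propose to ``extend each horizontal segment of a candidate path to a leftward or rightward ray'' and then invoke Theorem~\ref{thm:SSR}. But extending a unit horizontal segment to a ray adds intersections that are not present in $G$: a ray may stab a vertical segment that the original unit segment does not reach, so the \textsc{SSR} solution you get back need not dominate the corresponding vertices. The paper avoids exactly this by the strip decomposition above: within a strip all horizontal segments genuinely cross the strip boundary, so replacing them by rays anchored at that boundary does not create spurious intersections with the vertical segments inside the strip. Your write-up flags this as ``the main obstacle'' but does not supply the mechanism that makes it work; without something like the proper-interval/piercing-set argument, the plan does not go through. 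The coordinate-residue colouring is also doing no real work toward this issue: it bounds the \emph{number} of subproblems but does nothing to make each subproblem an \textsc{SSR} instance.
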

On the negative side, we shall show that solving \textsc{MDS} problem on Unit-$B_1$-VPG graph is NP-Hard. First, we give the approximation algorithm for the \textsc{SSR} problem and prove Theorem~\ref{thm:SSR} in Section~\ref{sec:SSR}. In Section~\ref{sec:vertical-L}, we prove Theorem~\ref{thm:stab-L-dom} and in Section~\ref{sec:unit-VPG}, we give a proof sketch for Theorem~\ref{thm:unit-bk-VPG}. Finally we draw conclusion in Section~\ref{sec:conclude}.

\section{Approximation for SSR-problem}\label{sec:SSR}

Throughout this section, we let $\mathcal{SSR}(R,V)$ denote an \textsc{SSR} instance where $R$ is a given set of (disjoint) leftward-directed rays and $V$ is a given set of (disjoint) vertical segments. The objective is to select a minimum cardinality subset of $R$ that intersects all segments in $V$. In this section, unless otherwise stated, whenever we say a ``ray'' we shall refer to a leftward-directed ray and whenever we say a ``segment'' we shall refer to a vertical segment. Without loss of generality, we can assume that all segments lie in the first quadrant of the plane, each segment intersects at least one ray and no two segments in $V$ has same $x$-coordinate. 

\medskip

\noindent\textbf{The algorithm:} Our algorithm consists of four main steps. (a) If some segments in $V$ intersect precisely one ray $r\in R$, we put $r$ in our heuristic solution $S$. (b) We delete all segments intersecting any ray in $S$ from $V$. (c) We find a ray in $R\setminus S$ whose $x$-coordinate of the right endpoint is the smallest among all rays in $R\setminus S$ and delete it from $R$ (when there are multiple such rays, choose anyone arbitrarily). We repeat steps (a)-(c) until all segments are deleted from the instance. We shall refer to the above algorithm as \emph{\textsc{SSR}-Algorithm}. Notice that, the \textsc{SSR}-Algorithm takes $O((n+m)\log (n+m))$ time (using segment tree data structure~\cite{berg2008}) where $n$ and $m$ are the number of rays and segments in the input. 

\medskip

We shall show that the cardinality of the set returned by the \textsc{SSR}-algorithm is at most twice the optimum cost of the relaxed linear programming of the input instance. This would give us the desired  approximation factor of the \textsc{SSR}-algorithm. First, we introduce some definitions and some extra steps in the above algorithm for analysis purpose. With each ray $r\in R$ we associate a \emph{token} $T_r$, which is a subset of $R$ (possibly empty). We add an \emph{Initialisation Step} to our algorithm where we assign $T_r=\{r\}$ for each $r\in R$ and assign $R_0$ as $R$, $V_0$ as $V$, $S_0$ as $\emptyset$. For $i\geq 1$ let $R_i,V_i,S_i$ be the set of rays, the set of segments and the heuristic solution constructed by the SSR-Algorithm, respectively at the end of $i^{th}$ iteration. A ray $r\in R_i$ is \emph{critical} if there is a segment $v\in V_i$ such that $r$ is the only ray in $R_i$ that intersects $v$. Notice that, in the $i^{th}$ iteration step (a) of the SSR-Algorithm collects all the rays that became critical at the end of $(i-1)^{th}$ iteration and adds them to the heuristic solution to create $S_{i}$. Let $D$ be a subset of $R_i$. A ray $r\in D$ lies \emph{in between} two rays $r',r''\in D$ if the $y$-coordinate of $r$ lies in between those of $r',r''$. A ray $r\in D$ lies \emph{just above} (resp. \emph{just below}) a ray $r'\in D$ if $y$-coordinate of $r$ is greater (resp. smaller) than that of $r'$ and no other ray lies in between $r,r'$ in $ D $. Two rays $r,r'\in D$ are \emph{neighbours} of each other if $r$ lies just above or below $r'$.

\begin{definition}
	For a ray $r\in R_{i-1}\setminus S_i$ and $i\geq 1$, the phrase ``$r$ passes the token to its neighbours" in the $i^{th}$ iteration shall refer to the following operations in the prescribed order:
	\begin{enumerate}
		\renewcommand{\theenumi}{(\roman{enumi})}
		\renewcommand{\labelenumi}{\theenumi}
		
		\item Let $r'$ lies just above $r$ and $r''$ lies just below $r$ in $R_{i-1}\setminus S_i$. For all $x\in T_r$ ($x$ and $r$ not necessarily distinct) do the following. If there is a segment in $V_{i}$ that intersects $x,r'$ and $r$ then assign $T_{r'}=T_{r'}\cup \{x\}$ and if there is a segment in $V_{i}$ that intersects $x,r''$ and $r$ then $T_{r''}=T_{r''}\cup \{x\}$. 
		
		\item $T_r=\emptyset$.
	\end{enumerate}
	
\end{definition}

\begin{algorithm}
	\caption{MOD-SSR-Algorithm}\label{alg:SSR}
	\hspace*{\algorithmicindent} \textbf{Input:} A set $R$ of leftward-directed rays and a set $V$ of vertical segments. \\
	\hspace*{\algorithmicindent} \textbf{Output:} A subset of $R$ that intersects all segments in $V$. 
	\begin{algorithmic}[1]
		\State $T_r=\{r\}$ for each $r\in R$ and $i\leftarrow 1,V_0\leftarrow V,R_0\leftarrow R, S \leftarrow\emptyset,S_0\leftarrow \emptyset$ \Comment{Initialisation.}
		\While{ $V_{i-1}\neq \emptyset$}
		\State $S\leftarrow S\cup\{r\colon r\in R_{i-1}, r~\text{ is critical after}~(i-1)^{th}\text{ iteration}\}$. 
		\medskip
		
		\Comment{Critical ray collection (step (a) of SSR-Algoritm).}
		\medskip
		
		\State $S_{i}\leftarrow S$. \Comment{$S_i$ is $S$ at the $i^{th}$ iteration.}
		\State $V_{i}\leftarrow$ the set obtained by deleting all segments from $V_{i-1}$ that intersect some rays in $S_{i}$. \medskip

		\Comment{Step (b) of SSR-Algorithm.} \medskip
		
		
		
		\State Find a $r\in R_{i-1}\setminus S_{i}$ whose $x$-coordinate of the right endpoint is the smallest.
		
		\State $r$ passes the token to its neighbours. 
		
		\State $R_{i}\leftarrow $ The set obtained by deleting $\{r\}\cup S_{i}$ from $R_{i-1}$. 
		\medskip
		
		%
		
		\Comment{Token-passing step (modified step (c) of SSR-Algorithm). }
		\medskip
		
		\State $i\leftarrow i+1$;
		\EndWhile
		\State \textbf{return} $S$
	\end{algorithmic}
\end{algorithm}

\begin{figure}[ht]
	\centering
	\scalebox{0.85}{
	\begin{tabular}{cccc}
		\includegraphics[page=3]{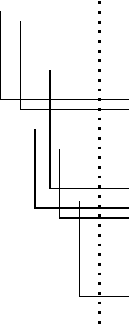}&\includegraphics[page=4]{figures.pdf}&\includegraphics[page=5]{figures.pdf}&\includegraphics[page=6]{figures.pdf}\\
		(a) & (b) & (c) & (d)
	\end{tabular}}
	\caption{(a) An input \textsc{SSR} instance, (b) $1^{st}$ iteration, (c) $2^{nd}$ iteration and (d) $3^{rd}$ iteration of the \textsc{MOD-SSR}-Algorithm with (a) as input. A dotted ray (or segment) indicates a deleted ray (or segment). }\label{fig:example-SSR-algo}
\end{figure}

The modified algorithm is stated in Algorithm~\ref{alg:SSR}. For an illustration, consider the input instance shown in Figure~\ref{fig:example-SSR-algo}(a). At the first iteration, $x$-coordinate of the right endpoint of $r_3$ is the smallest. So, $r_3$ passes the token to its neighbours ($r_2,r_4$) and gets deleted. At the end of $1^{st}$ iteration, notice that $r_2$ has become critical. At the beginning of the $2^{nd}$ iteration Algorithm~\ref{alg:SSR} put $r_2$ in the heuristic solution. Then all segment intersecting $r_2$ is deleted and $r_2$ itself is also deleted. Also in the second iteration, $r_1$ passes the token to its neighbour ($r_4$) and gets deleted. Finally in the third iteration $r_4$ is put in the heuristic solution. 

It is easy to see that, the rays returned by Algorithm~\ref{alg:SSR} indeed intersect all segments of the input SSR instance. Moreover for each $i$, the sets $R_i,S_i,V_i$ constructed by Algorithm~\ref{alg:SSR} is same as the sets constructed by \textsc{SSR}-algorithm. From now on, whenever we refer to the sets $R_i,S_i,V_i$ we shall refer to the sets constructed by Algorithm~\ref{alg:SSR}. We shall now prove that the cardinality of the solution returned by Algorithm~\ref{alg:SSR} is at most twice the optimum solution size of the input SSR instance. Below we describe some observations.

\begin{observation}\label{obs:token-pass}
	For any ray $r\in R$ and some integer $k\geq 1$, if there is a segment $v\in V_k$ that intersects $r$ in the input instance, then $v$ also intersects some ray $r'\in R_k$ such that $r\in T_{r'}$. 
\end{observation}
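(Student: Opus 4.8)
The plan is to prove this by induction on $k$. The base case $k=1$: if $v \in V_1$ intersects $r$ in the input, I want to find $r' \in R_1$ with $v$ intersecting $r'$ and $r \in T_{r'}$. After initialisation, $T_r = \{r\}$. If $r$ itself survives into $R_1$ (i.e., $r \notin \{r_{\text{del}}\} \cup S_1$ where $r_{\text{del}}$ is the ray deleted in iteration 1), we can take $r' = r$ since $r \in T_r$ still (token passing only *adds* to neighbours before emptying $T_{r_{\text{del}}}$, and $r \ne r_{\text{del}}$ means $T_r$ is untouched — wait, I need to be careful: $r$ could be a neighbour of $r_{\text{del}}$, but that only adds elements to $T_r$). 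If $r$ was deleted in iteration 1, then either $r \in S_1$ (impossible, $S_1 = \emptyset$ since no critical rays before iteration 1 — actually $S_1$ collects rays critical after the $0$th iteration, i.e. in the original instance; if $r$ is such a critical ray, then... hmm, this needs handling too) or $r = r_{\text{del}}$, in which case $r$ passed its token to its neighbours $r', r''$ in $R_0 \setminus S_1$. Since $v$ intersects $r$, and $v$ survived (so $v$ is not killed by $S_1$), $v$ must intersect some ray in $R_1$; that ray lies just above or just below $r$ among the non-deleted rays, hence is $r'$ or $r''$, and the token-passing rule put $r$ into its token precisely because $v$ witnesses the required triple intersection. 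The case $r \in S_1$: if $r$ is critical, it's put in $S$, then removed from $R_1$ — but then $v$ intersecting $r$ and $r \in S_1$ means $v$ is deleted in forming $V_1$, contradicting $v \in V_1$. So that case cannot arise.

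For the inductive step, assume the claim holds for $k$ and prove it for $k+1$. Take $v \in V_{k+1} \subseteq V_k$. Since $v \in V_k$, by induction there is $r' \in R_k$ with $v$ intersecting $r'$ and $r \in T_{r'}$. Now pass to iteration $k+1$: the rays removed are $\{r_{\text{del}}\} \cup S_{k+1}$ (relative to $R_k$; here $S_{k+1}$ collects rays critical after iteration $k$). If $r' \in S_{k+1}$, then $v$ intersects a ray in $S_{k+1}$, so $v \notin V_{k+1}$, contradiction — so $r' \notin S_{k+1}$. If $r' \ne r_{\text{del}}$, then $r'$ survives into $R_{k+1}$ and its token is only augmented, so $r \in T_{r'}$ still and we are done. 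If $r' = r_{\text{del}}$, then $r'$ passes its token to its neighbours in $R_k \setminus S_{k+1}$; since $v \in V_{k+1}$ and $v$ intersected $r' = r_{\text{del}}$, $v$ must still intersect some surviving ray, which must be a neighbour $r''$ of $r'$ in $R_k \setminus S_{k+1}$ (the one just above or just below, on the side where $v$ extends), and the triple $v$ intersecting $r$... wait — the token-passing rule checks whether a segment in $V_{k+1}$ intersects $x$, $r''$, and $r'$ where $x$ ranges over $T_{r'}$; taking $x = r$ (legal since $r \in T_{r'}$), the segment $v$ itself intersects $r$, $r''$, and $r'$, so the rule sets $T_{r''} = T_{r''} \cup \{r\}$. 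Hence $r \in T_{r''}$ with $r'' \in R_{k+1}$ and $v$ intersecting $r''$.

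The main obstacle I anticipate is the geometric argument that when the deleted ray $r_{\text{del}}$ is stabbed by a still-alive segment $v$, that segment is necessarily stabbed by one of $r_{\text{del}}$'s immediate neighbours among the surviving rays, *and* that the triple-intersection condition in the token-passing definition is actually met. This relies on the planarity/ordering structure of leftward rays and vertical segments: if $v$ crosses $r_{\text{del}}$ and also some other surviving ray, then by a continuity/sweep argument along $v$ it crosses every ray whose $y$-coordinate lies between, in particular the neighbour immediately above or below $r_{\text{del}}$ in $R_k \setminus S_{k+1}$; and any ray that $v$ crosses and whose right endpoint is at least as far right as that of $r_{\text{del}}$ (which holds since $r_{\text{del}}$ was chosen with the leftmost right endpoint) is crossed by $v$ at the same range of $x$-values, so $v$ realizes the required simultaneous intersection of $x$, the neighbour, and $r_{\text{del}}$. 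Making this precise — essentially that "$v$ intersects $r_{\text{del}}$ and intersects a ray on one side implies $v$ intersects the neighbour on that side, and the neighbour/$x$/$r_{\text{del}}$ triple is jointly stabbed by $v$" — is the crux; everything else is bookkeeping about which rays land in $S_i$ versus being deleted as $r_{\text{del}}$.
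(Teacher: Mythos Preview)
Your proof is correct and follows essentially the same approach as the paper's own proof. Both arguments rest on the same geometric observation you isolate at the end: because the deleted ray $r_{\text{del}}$ has the leftmost right endpoint in $R_{k}\setminus S_{k+1}$, any surviving segment $v$ that crosses $r_{\text{del}}$ has $x$-coordinate small enough to cross every ray of $R_k\setminus S_{k+1}$ whose $y$-coordinate lies in $v$'s range, hence in particular a neighbour of $r_{\text{del}}$; this neighbour then receives the token via Definition~1(i) with $v$ as the witnessing segment. The only organisational difference is that the paper iterates over the set $X$ of rays intersecting $v$ that were deleted (ordered by deletion time, equivalently by right-endpoint $x$-coordinate) and pushes the token forward step by step, whereas you run a clean induction on $k$; your induction makes the chain of token passes more explicit and also handles the ``$r'\in S_{k+1}$ forces $v\notin V_{k+1}$'' case more transparently than the paper does.
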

\begin{proof}
	Fix an arbitrary segment $v\in V_k$ that intersects $r$ in the input instance. Let $Y$ be the set of all rays that intersects $v$ in the input instance and $X\subseteq Y$ be the set of rays that passed the token to its neighbours at some $i^{th}$ iteration with $i<k$. Formally $X=\{r'\in Y\colon T_{r'}=\emptyset\}$. If $r\notin X$, then $r\in R_k$ and we are done. Otherwise, $r\in X$. Let $<r_1,r_2,...,r_t>$ be a sorted order of the rays in $X$ such that for $i<j$, $r_i$ passed the token to the neighbours before $r_j$. Due to step $6$ of \textsc{MOD-SSR}-algorithm, for $i<j$, the $x$-coordinate of the ray $r_i$ is less than that of $r_j$. Hence $<r_1,r_2,...,r_t>$ is an increasing sequence based on the $x$-coordinate of their right endpoint. This implies that whenever a ray $r_i\in Y$ was deleted it was \emph{shortest} among the rays that intersect $v$ and belonged to the set $R_{i-1}$. Therefore, whenever a ray $r_i\in X$ passed the token to its neighbours, it had a neighbor $r'_i\in R_i$ which intersects $v$. This fact combined with the token passing rule in Definition $1$ says that $v$ intersects a ray $r'\in R_{i}$ such that $r\in T_{r'_i}$. Applying the above arguments for all rays in $X$, we have the proof.    
\end{proof}

For any integer $k\geq 0$ and a ray $r\in R$, let $A_k(r)$ denote the segments in $V_{k}$ that intersect $r$ and let $B_k(r)$ be the rays in $R_{k}$ that intersect at least one segment in $A_k(r)$.


\begin{observation}\label{obs:bottom-top}
	For any ray $r\in R$ and some integer $k\geq 1$, let $c$ be the bottom-most (resp. top-most) ray in $B_{k-1}(r)$. If (i) $c$ is the only ray in $B_{k-1}(r)\setminus S_k$ whose token contains $r$ and (ii) $c$ passes the token to its neighbours in the $k^{th}$ iteration, then (a) there is exactly one ray $c'\in R_k$ whose token contains $r$; and $c'$ is the bottom-most (resp. top-most) ray in $B_{k}(r)$.
\end{observation}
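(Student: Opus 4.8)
The plan is to track the token of $r$ through the $k$-th iteration of \textsc{MOD-SSR}-Algorithm; I describe the argument for the bottom-most case, the top-most case being obtained by reflecting the configuration about a horizontal line. By hypothesis (ii), $c$ is the ray selected in step~6 of the $k$-th iteration, so afterwards $R_k=R_{k-1}\setminus(\{c\}\cup S_k)$, and the only rays that can receive $r$ from $T_c$ are $c'$ and $c''$, the rays lying respectively just above and just below $c$ in $R_{k-1}\setminus S_k$. The first step is to exclude $c''$: since $c$ is the bottom-most ray of $B_{k-1}(r)$ and $A_k(r)\subseteq A_{k-1}(r)$, no ray of $R_{k-1}$ below $c$ — in particular $c''$ — intersects any segment of $V_k$ that also intersects $r$, so by the rule in Definition~1 the element $r$ is not added to $T_{c''}$.

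Next I would show that $c$ is the \emph{only} ray of $R_{k-1}\setminus S_k$ whose token contains $r$. Hypothesis (i) gives this for the rays of $B_{k-1}(r)\setminus S_k$, so it suffices to prove that any ray $\rho\in R_{k-1}\setminus S_k$ with $r\in T_\rho$ lies in $B_{k-1}(r)$ (the case $\rho=r$ being immediate, as $A_{k-1}(r)\neq\emptyset$ and its segments all meet $r$). For $\rho\neq r$ I would look at the last token-pass that inserted $r$ into $T_\rho$: it was witnessed by a segment $v\in V_j$ (for some $j\le k-1$) that intersects both $r$ and $\rho$, and — arguing as in the proof of Observation~\ref{obs:token-pass}, using that the rays deleted in steps~6 and~7 appear in increasing order of the $x$-coordinate of their right endpoint, together with the disjointness of the segments — one shows that $v$ still belongs to $V_{k-1}$, whence $v\in A_{k-1}(r)$ and $\rho\in B_{k-1}(r)$. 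Granting this, once the $k$-th iteration is over the set of rays of $R_k$ whose token contains $r$ is contained in $\{c'\}$: indeed $c\notin R_k$ and $T_c$ was emptied, $c''$ did not receive $r$, and no other ray of $R_{k-1}\setminus S_k$ carried $r$.

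To finish I would show that $c'$ does carry $r$ after the iteration and that it is the bottom-most ray of $B_k(r)$; in particular $B_k(r)\neq\emptyset$, which one first verifies by checking that a segment of $A_{k-1}(r)$ meeting $c$ survives the deletion in step~5. So fix $v\in A_k(r)$ and apply Observation~\ref{obs:token-pass} at level $k$: some ray of $R_k$ meeting $v$ has $r$ in its token; by the previous paragraph that ray must be $c'$, so $r\in T_{c'}$, and since $c'$ meets $v\in A_k(r)$ also $c'\in B_k(r)$; together with the previous paragraph this shows $c'$ is the unique ray of $R_k$ whose token contains $r$. Finally, every ray of $R_k$ below $c'$ is, inside $R_{k-1}\setminus S_k$, either $c$ (which is not in $R_k$) or a ray below $c$ (which, $c$ being bottom-most in $B_{k-1}(r)\supseteq B_k(r)$, cannot lie in $B_k(r)$), so $c'$ is indeed the bottom-most ray of $B_k(r)$.

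The step I expect to be the real obstacle is the one in the second paragraph: proving that a ray still alive at the end of iteration $k-1$ can carry $r$'s token only if it still meets a surviving segment through $r$ — equivalently, that the witnessing segment of a token-pass is not deleted before its recipient ray. This is exactly where the one-dimensional geometry of the leftward rays and the disjointness of the vertical segments must be exploited carefully; once it is in hand, the exclusion of $c''$ and the identification of $c'$ as the extreme ray of $B_k(r)$ are routine.
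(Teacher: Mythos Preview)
Your core strategy coincides with the paper's: exclude the lower neighbour $c''$ using that $c$ is extremal in $B_{k-1}(r)$, and identify the upper neighbour $c'$ as the unique recipient. The paper's proof, however, is three sentences long. It asserts that ``there is exactly one neighbour $c'$ of $c$ in $R_{k-1}\setminus S_k$ such that there is a segment in $V_k$ that intersects $r,c,c'$'' and concludes immediately that $c'$ is \emph{the only ray} whose token contains $r$ after the pass. In other words, the paper tacitly treats hypothesis~(i) as if it ranged over all of $R_{k-1}\setminus S_k$ rather than only $B_{k-1}(r)\setminus S_k$; it never argues that a ray outside $B_{k-1}(r)$ cannot already be carrying $r$. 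This is not a problem for the paper overall, because every invocation of the observation inside Lemma~\ref{lem:at-most-two-copy} is made with that stronger hypothesis already supplied by the surrounding induction.

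You, by contrast, take the statement at face value and try to close this gap by proving the global invariant ``$\rho\in R_{k-1}\setminus S_k$ with $r\in T_\rho$ implies $\rho\in B_{k-1}(r)$''. That is the right instinct if one wants the observation exactly as written, and you correctly flag it as the real obstacle. But the sketch you give is not yet a proof: you claim that the witnessing segment $v\in V_j$ of the last token-pass inserting $r$ into $T_\rho$ must survive to $V_{k-1}$, appealing loosely to the $x$-ordering of step-6 deletions and the disjointness of segments. Those ingredients control how \emph{rays} are deleted in step~6; they say nothing direct about a segment being removed in step~5 because it happens to meet some unrelated critical ray $\tau\in S_{j'}$. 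Nothing you have written rules out $\tau$ intersecting $v$ while $\rho$ lives on, so the survival of the witness is unjustified as stated. If you want to pursue this route, you would need either a different witness (e.g.\ show that \emph{some} segment through $r$ and $\rho$ persists, not necessarily the original $v$) or an inductive invariant maintained across iterations; alternatively, you can simply strengthen hypothesis~(i) to match how the observation is actually used.
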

\begin{proof}
	Since $c$ is the bottom-most (resp. top-most) ray in $B_{k-1}(r)$ and $c\in B_{k-1}(r)\setminus S_k$, then $c$ must be bottom-most (resp. top-most) ray in $B_{k-1}(r)\setminus S_k$. There is exactly one neighbour $c'$ of $c$ in $R_{k-1}\setminus S_k$ such that there is a segment in $V_k$ that intersects $r,c,c'$. Therefore, when $c$ passes the token to its neighbours, $c'$ is the only ray whose token shall contain $r$. Since $c$ does not belong to $R_k$, $c$ does not belong to $B_{k}(r)$. Hence, $c'$ must be the bottom-most (resp. top-most) ray in $B_{k}(r)$. This completes the proof of the claim.
\end{proof}
\begin{lemma}\label{lem:at-most-two-copy}
	Let $r$ be a ray. After the termination of Algorithm~\ref{alg:SSR}, there are at most two tokens containing $r$.
\end{lemma}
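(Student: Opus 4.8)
I want to show that throughout the run of Algorithm~\ref{alg:SSR}, for any fixed ray $r$, the number of rays whose token currently contains $r$ never exceeds two. The strategy is to track the set
\[
\mathcal{T}_k(r) \;=\; \{\, r' \in R_k \;:\; r \in T_{r'}\,\}
\]
and prove by induction on $k$ the stronger invariant: $|\mathcal{T}_k(r)| \le 2$, and moreover every ray in $\mathcal{T}_k(r)$ lies in $B_k(r)$, with the elements of $\mathcal{T}_k(r)$ being exactly the extreme (bottom-most and/or top-most) rays of $B_k(r)$ among those still carrying $r$'s token. The base case $k=0$ is immediate since $\mathcal{T}_0(r)=\{r\}$. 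For the inductive step, I would case-split on what happens to the rays of $\mathcal{T}_{k-1}(r)$ during iteration $k$: a ray in $\mathcal{T}_{k-1}(r)$ can (a) be put into $S_k$ as critical, in which case its token is not passed and simply stays with it (it leaves $R_k$ but the relevant statement is about $R_k$, so it no longer contributes to $\mathcal{T}_k(r)$ — here I must be slightly careful about the phrasing "at most two tokens containing $r$", see below), (b) be the shortest ray chosen in step~6 and pass its token to its two neighbours, or (c) survive untouched into $R_k$ and keep its token.

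**Core argument.** The heart of the matter is that $B_{k-1}(r)$, viewed as a set of rays ordered by $y$-coordinate, behaves like a contiguous "block": every segment in $A_{k-1}(r)$ crosses $r$, and the rays crossing such a segment form a vertical interval around $r$'s height. When the chosen shortest ray $c$ in iteration $k$ passes its token, Observation~\ref{obs:token-pass} and Observation~\ref{obs:bottom-top} tell us that $c$ hands $r$'s token to exactly one neighbour on the appropriate side if $c$ is the bottom-most or top-most element of $B_{k-1}(r)$ carrying the token, and that neighbour becomes the new extreme element of $B_k(r)$. If $c$ is an interior ray of the token-carrying set — but by the inductive invariant the token-carrying rays are only the two extremes, so $c$ can only be one of (at most) two rays. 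Thus in the worst case both extremes get deleted/collected in the same iteration, each pushing the token to a single interior-facing neighbour, and those two neighbours could in principle coincide: if $B_{k-1}(r)$ had exactly three token-relevant rays this merging would give $|\mathcal{T}_k(r)| \le 2$; if the two new recipients are distinct we again get exactly two. In all cases the bound $|\mathcal{T}_k(r)|\le 2$ is preserved, and the "extremes of $B_k(r)$" structure is maintained, closing the induction.

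**Main obstacle.** The delicate point is the bookkeeping when a token-carrying ray is moved into $S_k$ rather than being deleted by step~6: the statement of the lemma says "at most two tokens containing $r$" after termination, so I need to decide whether tokens frozen inside $S$-rays count. The cleanest reading is that once $r'$ enters $S$ its token $T_{r'}$ is never modified again (the token-passing in Definition~1 only applies to rays in $R_{i-1}\setminus S_i$), so the total count of tokens ever containing $r$ at the end is: those held by the finally-surviving rays (bounded by the invariant above) plus possibly tokens frozen in $S$. I would argue that a ray $r'$ only gets $r$ into its token while $r' \in B_{\cdot}(r)$ and $r'$ is extreme; once it becomes critical and enters $S$, by the time that happens $r'$ was one of the (at most two) extreme token-carriers, and I can show no further ray can simultaneously carry $r$ — i.e. the "at most two" is a global invariant over $R_k \cup (S \text{ with } r \text{ in token})$, not just over $R_k$. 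Reconciling this precisely, and making sure the merging-of-neighbours case and the enters-$S$ case don't together push the count to three, is where the real care is needed; the geometric facts themselves are already supplied by Observations~\ref{obs:token-pass} and~\ref{obs:bottom-top}.
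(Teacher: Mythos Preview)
Your induction plan matches the paper's skeleton, but the strengthened invariant you propose---that the carriers $\mathcal{T}_k(r)$ are the top-most and bottom-most rays of $B_k(r)$---is false, and the induction fails at the very first interesting step. When $r$ itself passes its token at iteration $i$, the recipients are $r$'s two immediate neighbours in $R_{i-1}\setminus S_i$; if $B_i(r)$ contains additional rays above the upper neighbour or below the lower one, the two carriers sit strictly in the interior of $B_i(r)$, not at its extremes. The invariant the paper actually maintains (its Claim~2) is different and is precisely what makes the step go through: when there are two carriers they are \emph{neighbours of each other in $R_k$}, with $r$ lying between them in $R_0$; only when the carrier set has collapsed to a single ray is that ray guaranteed to be an extreme of $B_k(r)$ (and only then does Observation~\ref{obs:bottom-top} apply). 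The neighbour condition is the crux: when one of two carriers is selected in step~6 and passes its token, one of its two recipients is the other carrier, so no net increase is possible. Under your ``extremes'' invariant that conclusion is unavailable, since both recipients could be new rays.

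On the obstacle you correctly flag---tokens frozen on rays that enter $S$---your sketch does not settle it, and this is where the second real idea is needed. The paper closes it by a phase argument (its Claim~3): let $j$ be the first iteration at which some carrier of $r$ becomes critical. Up to $j$ the two-neighbour invariant bounds $|\mathcal{T}_k(r)|\le 2$ and nothing carrying $r$ has yet been frozen in $S$. At iteration $j$, either both carriers are critical (two tokens frozen, zero remain in $R_j$, done), or exactly one is; in the latter case the paper shows the surviving carrier is the bottom-most (or top-most) ray of $B_{j-1}(r)\setminus S_j$, whence Observation~\ref{obs:bottom-top} forces $|\mathcal{T}_k(r)|\le 1$ for every $k\ge j$. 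Either way the global count---frozen plus live---never exceeds two. This transition from the two-neighbour regime to the single-extreme regime is the missing piece in your plan.
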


\begin{proof}
	If $r$ never passed the token to its neighbours, then the only token that contains $r$ is $T_r$ and therefore the statement is true. Now let us assume that, at iteration $i$, $r$ passed the token to its neighbours and was deleted from $R_{i-1}$. Let $j$ be the minimum integer with $i<j$ such that at the end of $(j-1)^{th}$ iteration, there is a ray $p\in R_{j-1}$ which is critical and $r\in T_p$. If $j$ does not exist then we simply assume $j$ to be the number of iterations performed by \textsc{MOD-SSR}-Algorithm. We shall prove the following claims.
	
	\begin{claim}
		For each $k<i$, the only ray in $R_{k}$ whose token contains $r$ is $r$ itself.
	\end{claim}
	
	\medskip
	\noindent The proof of the above claim follows directly from the initialisation step and Token-passing step of the Algorithm~\ref{alg:SSR}.
	
	\begin{claim}
		For each $k$ with $i\leq k<j$, (a) there at most two rays $r', r''\in R_{k}$ such that $r\in T_{r'}\cap T_{r''}$; (b) if both $r',r''$ exists then they are neighbours and $r$ lies in between $r',r''$ in $R_0$; (c) if there is exactly one ray $r'''\in R_{k}$ such that $r\in T_{r'''}$ then $r'''$ must be the top-most or bottom-most ray in $B_{k}(r)$.
	\end{claim}  
	
	\medskip
	\noindent We prove the claim by induction on the number of iterations. Consider the $i^{th}$ iteration, where $r$ passed the token to its neighbours and got deleted from $R_{i-1}$ to create $R_{i}$. Clearly, if $r$ was not the top-most or bottom-most ray of $B_{i-1}(r)$, then at the end of $i^{th}$ iteration there are at most two rays $r_1,r_2\in R_{i}$ such that $r_1,r_2$ are neighbours, $r\in T_{r_1}\cap T_{r_2}$ and $r$ lies in between $r_1,r_2$ in $R_0$. Hence, the claim remains true in this case. Suppose $r$ was the top-most (resp. bottom-most) ray of $B_{i-1}(r)$. Since $r\notin S_i$, $r$ is the top-most (resp. bottom-most) ray of $B_{i-1}(r)\setminus S_i$. By Observation~\ref{obs:bottom-top}, at the end of $i^{th}$ iteration there is exactly one ray $r_3\in R_{i}$ such that $r\in T_{r_3}$ and $r_3$ must be the top-most (resp. bottom-most) ray in $B_{i}(r)$. Hence, the claim remains true in this case. We assume the claim to be true for all $i,i+1,\ldots,(k-1)^{th}$ iterations. Let $x$ passed the token to its neighbours in the $k^{th}$ iteration. If $r\notin T_x$ then the claim remains true. When $r\in T_x$, we have the following cases. 
	\begin{enumerate}
		\renewcommand{\theenumi}{(\roman{enumi})}
		\renewcommand{\labelenumi}{\theenumi}
		
		\item Let $x$ be the only ray in $R_{k-1}$ such that $r\in T_x$. Then by induction hypotheis, $x$ was the top-most (or bottom-most) ray in $B_{k-1}(r)$ and hence in $B_{k-1}(r)\setminus S_k$. By Observation~\ref{obs:bottom-top}, at the end of $k^{th}$ iteration there is exactly one ray $x'\in R_{k}$ such that $r\in T_{x'}$ and $x'$ must be the top-most (resp. bottom-most) ray in $B_{k}(r)$. 
		
		\item Let $x_1,x_2\in R_{k-1}$ be two rays such that $r\in T_{x_1}\cap T_{x_2}$. Without loss of generality, we further assume that $x=x_1$, $x_1$ lies just above $x_2$. If there exists a neighbour of $x_1$ (say $x_3$) which is different from $x_2$, then due to the Token-passing step of $k^{th}$ iteration, $x_1$ passes the token to its neighbours (i.e $x_2$ and $x_3$) and gets deleted from $R_{k-1}$ to create $R_{k}$. 
		Letting $r'=x_2$ and $r''=x_3$, we have the proof of the claim. If $x_3$ does not exist, then $x_1$ shall pass the token only to $x_2$ and $x_2$ becomes the top-most ray in $R_k$ (and therefore in $B_k(r)$).
	\end{enumerate}
	This completes the proof of the claim.    
	
	\begin{claim}
		For each $k\geq j$, there is at most one ray $r'\in R_{k}$ such that $r\in T_{r'}$ and if $r'$ exists then $r'$ must be the top-most or bottom-most ray in $B_k(r)$.
	\end{claim}
	
	\medskip \noindent We shall prove the claim by induction on the number of iteration. First consider the $j^{th}$ iteration. Due to Claim~2, we know that there was at most two rays $r',r''\in R_{j-1}$ such that $r\in T_{r'}\cap T_{r''}$.  Recall from the definition of $j$ that, there was a ray $p\in R_{j-1}$ which was critical and $r\in T_p$. We have the following cases.
	\begin{enumerate}
		\renewcommand{\theenumi}{(\roman{enumi})}
		\renewcommand{\labelenumi}{\theenumi}
		
		\item If there were only one ray $r'''\in R_{j-1}$ whose token contained $r$, then $p$ and $r'''$ must be same. Since $p$ is put in heuristic solution, $p$ will never pass the token to its neighbours at any subsequent iteration. Therefore only $T_p$ will contain $r$ after the termination of Algorithm~\ref{alg:SSR}.
		
		\item Let both $r',r''\in R_{j-1}$ exists. By Claim~2 they must be neighbours and $r$ lies in between $r',r''$ in $R_0$. Without loss of generality, assume that $r'$ lies just above $r''$ and $p=r''$. If both $r',r''\in S_j$, then there is nothing to prove. Otherwise, $r'$ is the only ray in $R_{j-1}\setminus S_j$ whose token contains $r$. By Observation~\ref{obs:token-pass}, any segment of $V_j$ that intersects $r$ in the input instance, intersects $r'$. Now consider the set $A_j(r)$ and the set $B_{j-1}(r)\setminus S_j$ and notice that $r'$ must be the bottom-most ray in $B_{j-1}(r)\setminus S_j$. If $r'$ did not pass the token to its neighbours in the $j^{th}$ iteration, then $r'$ becomes the bottom-most ray in $B_j(r)$ and the statement of the claim remains true. If $r'$ passed the token to its neighbours in the $j^{th}$ iteration, then we are done by Observation~\ref{obs:bottom-top}.
	\end{enumerate}

	Now assume that the statement of the claim remains true for $j,j+1,\ldots,(k-1)^{th}$ iteration. If there is no ray in $R_{k-1}\setminus S_k$ whose token contained $r$ then we directly have the proof of the claim. Otherwise by induction hypothesis, there is a unique ray $r'\in R_{k-1}$ whose token contains $r$ and if $r'$ exists then $r'$ is the bottom-most or top-most ray in $B_{k-1}(r)$. Again if $r'$ did not pass the token to its neighbours in the $k^{th}$ iteration, then the statement of the claim remains true. If $r'$ passed the token to its neighbours in the $k^{th}$ iteration, then $r'$ must be the bottom-most or top-most ray in $B_{k-1}(r)\setminus S_k$ and by Observation~\ref{obs:bottom-top} we have the proof of the claim.
	
	Now combining Claim~1, Claim~2 and Claim~3, we have the proof of the lemma. 
\end{proof}
For a segment $v\in V$, we let $N(v)$ denote the set of rays in $R$ that intersect $v$. Let $r\in S$ be a ray, $i$ be the minimum integer such that $r\in S_i$. In other words, $r$ was put in the heuristic solution in the $i^{th}$ iteration. This means there must exist a segment $\nu_r\in V_{i-1}$ such that $r$ is the only ray in $R_{i-1}$ that intersects $\nu_r$. Moreover, no ray in $S\setminus \{r\}$ intersects $\nu_r$ (otherwise $\nu_r$ would not have been present in $V_{i-1}$ by step-5 of MOD-SSR-Algorithm). Hence, all rays in $N(\nu_r)\setminus \{r\}$ must have passed the token to its neighbours. Therefore, for all $x\in N(\nu_r)\setminus \{r\}$ we have $T_{x}=\emptyset$. So, for each ray $r\in S$, there always exists a segment $\nu_r$ such that for all $x\in N(\nu_r)\setminus \{r\}$ we have $T_{x}=\emptyset$. We shall denote such a segment as a \emph{critical segment with respect to $r$} and denote it as $\nu_r$ (in case of multiplicity choose any one as $\nu_r$). Now we have the following lemma.

\begin{observation}\label{obs:constraint-critical} 
	For a ray $r\in S$ let $\nu_r$ be a critical segment with respect to $r$. Then $N(\nu_r)\subseteq T_r$.
\end{observation}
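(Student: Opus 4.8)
The plan is to obtain this as an almost immediate consequence of Observation~\ref{obs:token-pass}. Write $i$ for the first iteration at which $r$ enters the heuristic solution, so $r\in S_i$ and, by the discussion preceding the statement, $\nu_r\in V_{i-1}$ with $r$ being the \emph{only} ray of $R_{i-1}$ meeting $\nu_r$. The first step is to observe that the token $T_r$ can only grow during the run of Algorithm~\ref{alg:SSR}: a token is altered only when its owner passes it to its neighbours (emptying it) or when a neighbour passes a token into it (enlarging it), and at initialisation $T_r=\{r\}$. So I would argue that $r$ never passes its token. Once $r\in S_i$ it lies in $S_{i'}$ for every $i'\ge i$ and is removed from the ray set in the token-passing step of iteration $i$, hence $r\notin R_{i'}$ for all $i'\ge i$; and had $r$ passed its token in some earlier iteration it would have been deleted then and could not have become critical after iteration $i-1$. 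Therefore $r$ is never the ray selected in the token-passing step, $T_r$ is non-decreasing over the whole run, and it suffices to establish $N(\nu_r)\subseteq T_r$ already at the end of iteration $i-1$.

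For the main step, I would fix an arbitrary $x\in N(\nu_r)$, i.e.\ a ray of the input instance meeting $\nu_r$. Since $\nu_r\in V_{i-1}$, Observation~\ref{obs:token-pass} applied with $k=i-1$ yields a ray $r'\in R_{i-1}$ that meets $\nu_r$ and satisfies $x\in T_{r'}$. But $r$ is the unique ray of $R_{i-1}$ meeting $\nu_r$, which forces $r'=r$ and hence $x\in T_r$. As $x$ was arbitrary, $N(\nu_r)\subseteq T_r$ at the end of iteration $i-1$, and therefore at termination by the monotonicity noted above.

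The only case needing separate treatment is $i=1$, for which Observation~\ref{obs:token-pass} is not directly applicable (it requires $k\ge 1$); there, however, $V_{i-1}=V$ and $R_{i-1}=R$, so $r$ being the unique ray meeting $\nu_r$ gives $N(\nu_r)=\{r\}$, and $r\in T_r$ by the initialisation. I do not expect a real obstacle: essentially all the content sits in Observation~\ref{obs:token-pass}, and the remaining work is the bookkeeping that $r$ never passes its token and that $T_r$ is monotone. That is the one point deserving care, since if a ray could re-enter the ray set or shed its token after being placed in $S$, the reduction to ``the end of iteration $i-1$'' would fail.
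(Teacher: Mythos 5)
Your proposal is correct and follows essentially the same route as the paper: both rest on Observation~\ref{obs:token-pass} combined with the fact that $r$ is the unique ray of $R_{i-1}$ intersecting $\nu_r$. You are somewhat more explicit than the paper (applying the observation directly at $k=i-1$, checking that $T_r$ cannot change after $r$ enters $S$, and handling $i=1$ separately), but these are refinements of the same argument rather than a different approach.
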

\begin{proof}
	Consider any arbitrary but fixed deleted ray $y\in N(\nu_r)\setminus \{r\}$ which was deleted at some $j^{th}$ iteration. By Observation~\ref{obs:token-pass}, there exists a ray $y'\in R_j$ such that $y'$ intersects $v$ and $y\in T_{y'}$. Now applying the above argument for all rays in $N(\nu_r)\setminus \{r\}$, we have the proof.
\end{proof}

\begin{lemma}\label{lem:2-approx}
	Let $S$ be the set returned by Algorithm~\ref{alg:SSR} with $\mathcal{SSR}(R,V)$ as input and $OPT$ be an optimum solution of $\mathcal{SSR}(R,V)$. Then $|S|\leq 2|OPT|$.
\end{lemma}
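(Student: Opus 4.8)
The plan is to bound $|S|$ by twice the optimum value $\tau^{*}$ of the natural LP relaxation of $\mathcal{SSR}(R,V)$ rather than comparing with $|OPT|$ directly; since the $0/1$ indicator vector of $OPT$ is feasible for that LP, $\tau^{*}\le|OPT|$, so it suffices to establish $|S|\le 2\tau^{*}$. The LP I would use is $\min\sum_{r\in R}x_r$ subject to $\sum_{r'\in N(v)}x_{r'}\ge 1$ for every segment $v\in V$ together with $x_r\ge 0$ for every $r\in R$.

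First I would fix an optimal fractional solution $(x_r)_{r\in R}$ of this LP and, for each $r\in S$, a critical segment $\nu_r$ with respect to $r$ (which exists by the discussion preceding Observation~\ref{obs:constraint-critical}). The LP constraint indexed by $\nu_r$ reads $\sum_{r'\in N(\nu_r)}x_{r'}\ge 1$, while Observation~\ref{obs:constraint-critical} gives $N(\nu_r)\subseteq T_r$, where $T_r$ denotes the token of $r$ after Algorithm~\ref{alg:SSR} terminates. Since $x\ge 0$, this yields $\sum_{r'\in T_r}x_{r'}\ge 1$ for every $r\in S$. Summing this over all $r\in S$ and exchanging the order of summation gives
$$|S|\ \le\ \sum_{r\in S}\sum_{r'\in T_r}x_{r'}\ =\ \sum_{r'\in R}x_{r'}\,\bigl|\{\,r\in S:\ r'\in T_r\,\}\bigr|.$$

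Next I would invoke Lemma~\ref{lem:at-most-two-copy}: after termination of Algorithm~\ref{alg:SSR} every ray lies in at most two tokens, so $|\{\,r\in S:\ r'\in T_r\,\}|\le 2$ for each $r'\in R$. Plugging this into the display (again using $x\ge 0$) yields $|S|\le 2\sum_{r'\in R}x_{r'}=2\tau^{*}\le 2|OPT|$, which is exactly the claim.

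I do not expect a genuine obstacle here: the three ingredients — existence of a critical segment $\nu_r$ for each $r\in S$, the containment $N(\nu_r)\subseteq T_r$, and the ``at most two tokens'' bound — are all already available, so what remains is the bookkeeping above. The one point worth stating carefully is that summing the LP constraints indexed by the segments $\nu_r$, $r\in S$, is legitimate even though the map $r\mapsto\nu_r$ need not be injective: each such constraint has right-hand side $1$, so a sum of $|S|$ of them is at least $|S|$ regardless of repetitions, and no injectivity is needed. (If one preferred to avoid the LP, the same two facts give a direct charging argument: send each $r\in S$ to some ray of the nonempty set $OPT\cap T_r$, and observe that each ray of $OPT$ is the image of at most two elements of $S$ by Lemma~\ref{lem:at-most-two-copy}.)
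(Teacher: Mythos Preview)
Your proposal is correct and follows essentially the same route as the paper's proof: both fix an optimal fractional LP solution, use the critical segment $\nu_r$ together with Observation~\ref{obs:constraint-critical} to get $\sum_{r'\in T_r}x_{r'}\ge 1$ for each $r\in S$, and then apply Lemma~\ref{lem:at-most-two-copy} to bound the double sum by $2\sum_{r'\in R}x_{r'}$. Your parenthetical LP-free charging argument (map each $r\in S$ to a ray of $OPT\cap N(\nu_r)\subseteq OPT\cap T_r$ and use Lemma~\ref{lem:at-most-two-copy} to bound preimages) is a nice alternative that the paper does not state explicitly.
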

\begin{proof}
	
	Let $R$ be the set of rays and $V$ be the set of segments with $|R|=n, |V|=m$. To prove the lemma we consider the following integer linear programming (ILP) formulation $Q$ of $\mathcal{SSR}(R,V)$ and the corresponding relaxed linear programming (LP) formulation $Q_l$ where for each ray $r\in R$, let $x_r\in\{0,1\}$ denote the variable corresponding to $r$.
	
	\begin{center}
		\scalebox{0.85}{
		\begin{tabular}{|c|}
			\hline
			\begin{minipage}{.5\textwidth}
				\begin{equation}
				\begin{array}{ll@{}ll}
				\text{minimize}  & \displaystyle\sum\limits_{r\in R} x_r &\\
				\text{subject to}& \displaystyle\sum\limits_{r\in N(v)} x_r \geq 1,  &\forall v\in V\\
				 x_r \in \{0,1\}, & \forall r\in R
				\end{array}
				\end{equation}
			\end{minipage}\\
			$Q$\\
			\hline
		\end{tabular}}
	\end{center}	
	We shall show that the set $S$ returned by Algorithm~\ref{alg:SSR} gives an integral solution for $Q$ whose cost (i.e. cardinality of $S$) is at most twice the optimum cost of $Q$. 
	This will immediately imply the statement of the lemma. Let $\textbf{Q}_l=\{x_r\}_{r\in R}$ be an optimal solution of $Q_l$. Also define $y_r=1$ if $r\in S$, $y_r=0$ if $r\notin S$ and $\textbf{Q}'=\{y_r\}_{r\in R}$. 
	Now we claim that $\textbf{Q}'$ is a feasible solution of $Q$. This is true because Algorithm~\ref{alg:SSR} terminates only when no segments are left in $V_i$. 
	Hence, for each $v$ there is a ray $r\in S$ that intersects $v$ and therefore each constraint of $Q$ contains a variable $y_r$ such that $y_r=1$ in $\textbf{Q}'$. 
	Hence $\textbf{Q}'$ is feasible. Now we fix any arbitrary $r\in S$ and $\nu_r$ be a critical segment with respect to $r$. Then due to Observation~\ref{obs:constraint-critical}, we know that for all $z\in N(\nu_r)\setminus \{r\}$ we have $T_{z}=\emptyset$ and $N(\nu_r)\subseteq T_r$. 
	Therefore, for the constraint corresponding to $\nu_r$ in $Q_l$, we have that $$\sum_{z\in N(\nu_r)} y_z=1\leq \sum_{z\in N(\nu_r)} x_z \leq \sum_{z\in T_r} x_z \hspace{20pt}\left[\text{since }N(\nu_r)\subseteq T_r\text{ by Observation~\ref{obs:constraint-critical}}\right]$$
	
	Due to Lemma~\ref{lem:at-most-two-copy}, we know that for each ray $r\in R$ there are at most two rays $r_1,r_2$ such that $r\in T_{r_1}\cap T_{r_2}$. Therefore,

	$$|S|=\sum_{r\in S} y_r=\sum_{r\in S} \sum_{z\in N(\nu_r)} y_z\leq \sum_{r\in S} \sum_{z\in T_r} x_z \leq 2\sum_{z\in R} x_z \leq 2|OPT|$$
	
	This completes the proof of the lemma.
\end{proof}

\textbf{Proof of Theorem~\ref{thm:SSR}} follows from Lemma~\ref{lem:2-approx} which essentially implies that SSR-algorithm is a 2-approximation algorithm for \textsc{SSR} problem. We shall use the following corollary in the proof of Theorem~\ref{thm:stab-L-dom} and~\ref{thm:unit-bk-VPG} which follows from Lemma~\ref{lem:2-approx}.

\begin{corollary}\label{cor:SSR-2-approx}
	Let $R$ be a set of leftward-directed rays and $V$ be a set of vertical segments. The cost of an optimal solution for the ILP of $\mathcal{SSR}(R,V)$ is at most $2$ times the cost of an optimal solution for the relaxed LP of $\mathcal{SSR}(R,V)$.
\end{corollary}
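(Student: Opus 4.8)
The plan is to read the statement off the internal argument in the proof of Lemma~\ref{lem:2-approx} rather than from its final inequality. Recall that there we fixed an optimal solution $\textbf{Q}_l=\{x_r\}_{r\in R}$ of the relaxed LP $Q_l$ and formed the integral vector $\textbf{Q}'=\{y_r\}_{r\in R}$ with $y_r=1$ iff $r\in S$, where $S$ is the set returned by Algorithm~\ref{alg:SSR} on input $\mathcal{SSR}(R,V)$. It was shown there that $\textbf{Q}'$ is a feasible solution of the ILP $Q$; in particular the cost of an optimal solution of $Q$ is at most $|S|=\sum_{r\in R} y_r$.

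First I would reproduce the chain of inequalities at the end of the proof of Lemma~\ref{lem:2-approx}. For each $r\in S$, the constraint of $Q_l$ corresponding to the critical segment $\nu_r$ together with Observation~\ref{obs:constraint-critical} gives $\sum_{z\in N(\nu_r)} y_z = 1 \le \sum_{z\in N(\nu_r)} x_z \le \sum_{z\in T_r} x_z$. Summing over all $r\in S$ and invoking Lemma~\ref{lem:at-most-two-copy}, which guarantees that each ray of $R$ lies in at most two tokens, yields $\sum_{r\in S}\sum_{z\in T_r} x_z \le 2\sum_{z\in R} x_z$. Hence $|S|\le 2\sum_{z\in R} x_z$, i.e. $|S|$ is at most twice the cost of the optimal LP solution $\textbf{Q}_l$.

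Combining the two bounds, the cost of an optimal solution of the ILP of $\mathcal{SSR}(R,V)$ is at most $|S|$, which is at most twice the cost of an optimal solution of the relaxed LP of $\mathcal{SSR}(R,V)$, as claimed. I do not expect any real obstacle here: the corollary is exactly the proof of Lemma~\ref{lem:2-approx} stopped one inequality earlier (at $2\sum_{z\in R} x_z$ rather than at $2|OPT|$), plus the observation that the feasible integral solution $S$ certifies an upper bound on the ILP optimum. The only point worth a sentence of care is that this holds for every instance, since Algorithm~\ref{alg:SSR} and the LP relaxation are defined for all instances $\mathcal{SSR}(R,V)$, so no further hypotheses are needed.
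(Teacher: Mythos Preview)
Your proposal is correct and is exactly the argument the paper intends: the corollary is stated as following from Lemma~\ref{lem:2-approx}, and what you do---stop the final chain of inequalities at $2\sum_{z\in R} x_z = 2\cdot OPT(Q_l)$ and combine with the feasibility of $S$ for the ILP---is precisely how that implication works. There is nothing to add.
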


\section{Approxmation algorithm for MDS on vertically-stabbed-L graphs}\label{sec:vertical-L}

In this section, we shall give a polynomial time $8$-approximation algorithm to solve the \textsc{MDS} problem on vertically-stabbed-\textsc{L} graphs. In the rest of the paper, $OPT(Q)$ (resp. $OPT(Q_l)$) denotes the cost of the optimum solution of an ILP $Q$ (resp. LP $Q_l$).

\medskip 
\noindent \textbf{Overview of the algorithm:} First, we solve the relaxed LP formulation of the ILP of the \textsc{MDS} problem on the input vertically-stabbed-\textsc{L} graph $G$ and create two subproblems. We shall show that one of those two subproblems is equivalent to the \textsc{SSR} problem and the other is equivalent to a \textsc{Stabbing Rays with Segments} problem (defined below) which was introduced by Katz et al.~\cite{katz2005}. We solve these two subproblems individually and show that the union of the solutions gives a solution for the \textsc{MDS} problem on $G$ which is at most $8$ times the optimal solution.

In the \textbf{\emph{Stabbing Rays with Segments}} (\textsc{SRS}) problem, the input is a set $R$ of (disjoint) leftward-directed rays and a set $V$ of (disjoint) vertical segments. The objective is to select a minimum cardinality subset of $V$ that intersects all rays in $R$. 

\begin{result}[\cite{katz2005}]\label{res:KMN}
	There is a $2$-approximation algorithm for the \textit{SRS} problem for $n$ rays and $m$ segments that runs in time $O((m + n) \log(m + n))$, using $O(n + m \log m)$ space.
\end{result}

We shall show that the cost of the optimum solution of the ILP of \textit{SRS} is at most twice the cost of the optimum solution of the corresponding relaxed LP. Below we restate the algorithm of Katz et al.~\cite{katz2005} in a way that would assist our analysis.

\medskip

\noindent\textbf{2-approximation algorithm for SRS problem:} With each segment $v\in V$, we associate a token $T_v$ which is a subset of $V$. Initialise $T_v=\emptyset$ for each $v\in V$. Let $r_i$ be the ray whose right-endpoint, $(x_i , y_i)$, has the smallest $x$-coordinate. (We can assume without loss of generality that $x$- and $y$-coordinates of the endpoints of the rays are all distinct.) Assuming that there is a feasible solution to the \textsc{SRS} instance, there must exist a segment of $V$ that intersects $r_i$. Let $N(r_i)\subseteq V$ be the set of segments that intersect $r_i$. Let $v_{top}$ (resp. $v_{bot}$) be a segment in $N(r_i)$ whose top endpoint is top-most (resp., bottom endpoint is bottom-most); it may be that $v_{top}=v_{bot}$. We add both $v_{top}$ and $v_{bot}$ to our heuristic solution set $S$. Also we set $T_{v_{top}}=T_{v_{bot}}=N(r_i)$. Then we remove from $R$ all of the rays that intersect $v_{top}$ or $v_{bot}$, delete all segments in $N(r_i)$ and then repeat the above steps untill $R=\emptyset$. We shall refer the algorithm stated above as \textsc{KMN}-algorithm.

\medskip

First, we state the following observations required to prove Lemma~\ref{lem:approx-LP-SRS}.

\begin{observation}\label{obs:stab-L-trivial-1}
	For each ray $r$, there is a segment $v\in S$ that intersects $r$.
\end{observation}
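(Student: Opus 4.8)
The plan is to prove this by tracking, for each ray, the iteration of the \textsc{KMN}-algorithm in which it is deleted from the working set of rays. The key structural fact is that the algorithm halts only when its ray set becomes empty (this is the termination condition of the outer loop), so every ray of the original input $R$ is removed at some iteration; it therefore suffices to show that whenever a ray is removed, it intersects a segment that has already been added to $S$.

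First I would recall the deletion rule precisely: in a single iteration the algorithm picks the ray $r_i$ whose right endpoint has the smallest $x$-coordinate among the surviving rays, computes the segments $v_{top},v_{bot}\in N(r_i)$, adds both of them to the heuristic solution $S$, and then removes from the ray set exactly those rays that intersect $v_{top}$ or $v_{bot}$ (and deletes the segments of $N(r_i)$). Hence, by construction, a ray $r$ is deleted in some iteration only because it intersects $v_{top}$ or $v_{bot}$ of that iteration, and both of these segments belong to $S$ from that iteration onward. This already gives the statement once we know $r$ is deleted at all.

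To close the loop I would add the small sanity check that each iteration makes progress, so that the algorithm indeed terminates and the above applies to every ray: since the \textsc{SRS} instance is assumed feasible, $N(r_i)\neq\emptyset$, and both $v_{top}$ and $v_{bot}$ lie in $N(r_i)$, i.e.\ they intersect $r_i$; consequently $r_i$ itself is among the rays deleted in its own iteration. Thus the surviving ray set strictly shrinks each iteration and eventually becomes empty, so every ray $r\in R$ is deleted in some iteration and therefore intersects a segment of $S$. I do not expect any real obstacle here — the only points requiring care are confirming termination (every ray is eventually deleted) and confirming that deletion always occurs through a segment already placed in $S$, and both follow immediately from the description of the \textsc{KMN}-algorithm.
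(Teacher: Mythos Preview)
Your argument is correct: the \textsc{KMN}-algorithm only deletes a ray because it intersects one of the two segments $v_{top},v_{bot}$ just placed into $S$, and since the feasibility assumption guarantees $N(r_i)\neq\emptyset$, the chosen ray $r_i$ itself is always deleted, so the loop strictly decreases $|R|$ and terminates with every ray having been removed via some segment of $S$. The paper does not actually give a proof of this observation (it is stated as immediate from the algorithm description), and your write-up is precisely the natural justification one would supply.
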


\begin{observation}\label{obs:stab-L-trivial-2}
	For each segment $v\in V$, there are at most two tokens such that both of them contains $v$.
\end{observation}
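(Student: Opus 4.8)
The plan is to follow, for a fixed segment $v\in V$, every moment at which $v$ could be inserted into some token during a run of the \textsc{KMN}-algorithm, and to show that all such moments occur inside a single iteration. The structural fact driving the argument is that the algorithm writes to tokens only through the assignment $T_{v_{top}}=T_{v_{bot}}=N(r_i)$, and in the very same iteration it deletes from $V$ every segment of $N(r_i)$, in particular $v_{top}$ and $v_{bot}$ themselves. Hence a segment that is ever added to $S$ has its token set exactly once and is simultaneously removed from the instance, so its token is never overwritten afterwards.

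First I would isolate the following equivalence, immediate from the description of the algorithm (tokens start empty and are only ever overwritten by the current $N(r_i)$): a segment $v$ belongs to the token $T_u$ upon termination if and only if there is an iteration whose processed ray $r_i$ satisfies $v\in N(r_i)$ and for which $u\in\{v_{top},v_{bot}\}$. Next I would argue that there is at most one iteration with $v\in N(r_i)$: the moment such an iteration occurs, all of $N(r_i)$, and hence $v$, is deleted from $V$, so $v$ cannot lie in $N(r_j)$ for any ray $r_j$ processed in a later iteration, where throughout $N(\cdot)$ is read with respect to the current, repeatedly pruned, set of segments. Thus the event ``$v\in N(r_i)$'' happens at most once.

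Combining the two steps yields the bound: if $v$ never appears in any $N(r_i)$ then it lies in no token, and otherwise it appears in exactly one $N(r_i)$, in which iteration it can enter only the (at most two, and equal when $v_{top}=v_{bot}$) tokens $T_{v_{top}}$ and $T_{v_{bot}}$. I do not anticipate a genuine obstacle, as the observation is essentially bookkeeping; the one point meriting care is fixing the convention that $N(r_i)$ always refers to segments still present in the shrinking set $V$, since it is exactly this that makes the ``a deleted segment never re-enters a token'' step valid, after which the claim is the short case check above.
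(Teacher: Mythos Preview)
Your argument is correct: the key facts are exactly that tokens are only written by the assignment $T_{v_{top}}=T_{v_{bot}}=N(r_i)$, that $N(r_i)$ is taken with respect to the current (pruned) set of segments, and that all of $N(r_i)$ is deleted in the same iteration, so each $v$ can lie in $N(r_i)$ for at most one iteration and hence in at most the two tokens $T_{v_{top}},T_{v_{bot}}$ created there. The paper itself states this observation without proof, treating it as immediate from the algorithm description; your write-up simply makes the bookkeeping explicit, and there is nothing to compare.
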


\begin{lemma}\label{lem:approx-LP-SRS}
	Let $R$ (resp. $V$) be a set of (disjoint) leftward-directed rays (resp. vertical segments), $Q$ be the ILP of the \textit{SRS} instance with $R,V$ as input and $Q_l$ be the corresponding relaxed LP. Then $OPT(Q)\leq 2\cdot OPT(Q_l)$.
\end{lemma}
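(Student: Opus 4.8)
The plan is to mimic the LP-rounding / dual-fitting argument used for the \textsc{SSR} problem in Lemma~\ref{lem:2-approx}, but now on the \textsc{SRS} side. First I would set up the integer program $Q$ for the \textsc{SRS} instance: a $0/1$ variable $z_v$ for each segment $v\in V$, the objective $\min\sum_{v\in V} z_v$, and one covering constraint $\sum_{v\in N(r)} z_v\ge 1$ for each ray $r\in R$ (here $N(r)$ is the set of segments intersecting $r$, as already defined in the excerpt). Let $\mathbf{Q}_l=\{z_v\}_{v\in V}$ be an optimal fractional solution of the relaxation $Q_l$, so $\sum_{v\in V} z_v = OPT(Q_l)$. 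Let $S$ be the heuristic solution produced by the \textsc{KMN}-algorithm, and define $y_v=1$ if $v\in S$ and $y_v=0$ otherwise. By Observation~\ref{obs:stab-L-trivial-1}, every ray is stabbed by some segment in $S$, so $\{y_v\}$ is a feasible integral solution of $Q$, and it suffices to show $\sum_{v\in S} y_v \le 2\,OPT(Q_l)$, which then forces $OPT(Q)\le 2\,OPT(Q_l)$.

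Next I would produce, for each $v\in S$, a ``witness ray'' $\rho_v$ that certifies why $v$ was added, playing the role that the critical segment $\nu_r$ played in the \textsc{SSR} analysis. When the algorithm processes the shortest remaining ray $r_i$ and adds $v_{top}$ and $v_{bot}$, it sets $T_{v_{top}}=T_{v_{bot}}=N(r_i)$; so for each $v\in S$ I take $\rho_v$ to be the ray $r_i$ that caused $v$ to enter $S$. The two key properties I need are: (1) $N(\rho_v)\subseteq T_v$ — immediate from the assignment $T_v=N(r_i)$ at the moment $v$ is inserted (a token is only set once, when the segment is added, and is never shrunk afterward); and (2) the tokens have bounded overlap, namely each segment $v\in V$ lies in at most two tokens — this is precisely Observation~\ref{obs:stab-L-trivial-2}. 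Property (1) gives, for the constraint of $Q_l$ indexed by $\rho_v$,
\[
\sum_{w\in N(\rho_v)} y_w = 1 \le \sum_{w\in N(\rho_v)} z_w \le \sum_{w\in T_v} z_w,
\]
where the middle inequality is feasibility of $\mathbf{Q}_l$ for the constraint $\rho_v$, and the last uses $N(\rho_v)\subseteq T_v$.

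Finally I would chain these to finish: summing the displayed inequality over all $v\in S$,
\[
|S| = \sum_{v\in S} y_v = \sum_{v\in S}\sum_{w\in N(\rho_v)} y_w \le \sum_{v\in S}\sum_{w\in T_v} z_w \le 2\sum_{w\in V} z_w = 2\,OPT(Q_l),
\]
where the step $\sum_{v\in S}\sum_{w\in T_v} z_w \le 2\sum_{w\in V} z_w$ is exactly the bounded-overlap property (each $w$ appears in at most two of the sets $T_v$, $v\in S$). Since $S$ yields a feasible integral solution of $Q$ of cost $|S|$, we get $OPT(Q)\le |S|\le 2\,OPT(Q_l)$, which is the claim.

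I expect the only real subtlety to be making sure the token bookkeeping is airtight: specifically, that when I write $\sum_{v\in S}\sum_{w\in T_v}$ the relevant tokens are exactly the nonempty ones attached to segments of $S$, and that every segment $w$ that appears does so at most twice — this is where I must lean carefully on Observation~\ref{obs:stab-L-trivial-2}, and possibly re-examine its statement to confirm it counts tokens over all of $V$ (equivalently over $S$, since only segments in $S$ ever receive a nonempty token). I would also double-check that $T_v$ is assigned only at insertion time and never re-assigned later in the \textsc{KMN}-algorithm, since the whole argument relies on $N(\rho_v)\subseteq T_v$ persisting until termination; if the algorithm could overwrite $T_v$ on a later iteration I would need to argue the inclusion is preserved or pick $\rho_v$ to be the \emph{last} ray that triggered $v$. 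Apart from that, the proof is a direct transcription of the \textsc{SSR} argument, so no new ideas beyond the two observations should be required.
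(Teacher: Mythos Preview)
Your proposal is essentially the paper's own proof: run the \textsc{KMN}-algorithm, use Observation~\ref{obs:stab-L-trivial-1} for feasibility, bound each unit of $|S|$ by the fractional mass on the token $T_v=N(r_i)$ of the triggering ray, and finish with the two-overlap bound from Observation~\ref{obs:stab-L-trivial-2}. Your worries at the end are unfounded: in the \textsc{KMN}-algorithm every segment in $N(r_i)$ is deleted in the same iteration, so no segment is ever revisited and $T_v$ is set exactly once; likewise only segments in $S$ ever receive a nonempty token.

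One small slip to fix: the equality $\sum_{w\in N(\rho_v)} y_w = 1$ (and hence the second equality in your final chain) is false in general, because when $v_{top}\neq v_{bot}$ both lie in $N(r_i)\cap S$, so that sum equals $2$. The paper avoids this by writing simply $y_v=1\le \sum_{w\in N(\rho_v)} z_w=\sum_{w\in T_v} z_w$ and then summing $|S|=\sum_{v\in S}1\le \sum_{v\in S}\sum_{w\in T_v} z_w$; drop your spurious intermediate step and the argument is complete.
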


\begin{proof}
	Consider the following ILP of the \textit{SRS} instance with $R,V$ as input. For a ray $u\in R$, let $N(u)$ denote the set of segments in $V$ that intersect $u$.
	\begin{center}
	\scalebox{0.85}{
		\begin{tabular}{|c|}
		\hline
		\begin{minipage}{.65\textwidth}
		\begin{equation}
		\begin{array}{ll@{}ll}
		\text{minimize}  & \displaystyle\sum\limits_{w\in V} x_w &\\
		\text{subject to}& \displaystyle\sum\limits_{w\in N(u)} x_w \geq 1,  &\forall u\in R\\
		x_w\in \{0,1\}, & \forall w\in V
		\end{array}
		\end{equation}
		\end{minipage}\\
		$Q$\\
		\hline
		\end{tabular}}
	\end{center}
	Let $\textbf{X}=\{x_v\}_{v\in V}$ be an optimal solution of $Q_l$ (relaxed LP of $Q$) where $x_v$ denotes the value of the variable in $Q_l$ corresponding to $v\in V$. Let $S$ be the solution returned by the \textsc{KMN} algorithm with $R,V$ as input. Now define for each $v\in V$, $x'_v=1$ if $v\in S$, $x'_v=0$ if $v\notin S$ and let $\textbf{X}'=\{x'_v\}_{v\in V}$. By Observation~\ref{obs:stab-L-trivial-1}, $\textbf{X}'$ is a feasible solution of $Q$. For each $z\in S$, there is a ray $r_i$ such that $T_z=N(r_i)$. Therefore, for the constraint corresponding to $r_i$ in $Q_l$, we have the following
	\begin{center}
		\begin{tabular}{ccc}
			\begin{minipage}{.55\textwidth}
				\begin{equation}\label{ineq:1}
				x'_z= 1 \leq \displaystyle\sum\limits_{v\in N(r_i)} x_v = \displaystyle\sum\limits_{v\in T_{z}} x_v
				\end{equation}
			\end{minipage} 
		\end{tabular}
	\end{center} 
	Hence,
	
	\begin{center}
			
				\begin{minipage}{.55\textwidth}
					\begin{equation} 
					\begin{split}
					|S| & = \displaystyle\sum\limits_{v\in S} x'_v \\
					& \leq \displaystyle\sum\limits_{v\in S} \displaystyle\sum\limits_{v'\in T_v} x_{v'}\hspace{20pt} \text{Using Inequality~\ref{ineq:1}}\\
					& \leq 2\displaystyle\sum\limits_{v'\in V} x_{v'}\hspace{32pt} \text{Using Observation~\ref{obs:stab-L-trivial-2}}\\
					& =  2\cdot OPT(Q_l)
					\end{split}
					\end{equation}
				\end{minipage} 
		\end{center}
	
	This completes the proof.
\end{proof}

Now we are ready to describe our approximation algorithm for \textsc{MDS} problem on vertically-stabbed-\textsc{L} graphs. Let $\mathcal{R}$ be a vertically-stabbed-\textsc{L}-representation of a graph $G$. Without loss of generality, we assume that 
\begin{enumerate}
	
	\renewcommand{\theenumi}{(\roman{enumi})}
	\renewcommand{\labelenumi}{\theenumi}
	
	\item the vertical line $x=0$ intersects all the \textsc{L}-paths in $\mathcal{R}$,
	\item the $x$-coordinate of the corner point of each \textsc{L}-path in $\mathcal{R}$ is strictly less than $0$, and 
	\item whenever two distinct \textsc{L}-paths intersect in $\mathcal{R}$, they intersect at exactly one point.
\end{enumerate}

\begin{figure}
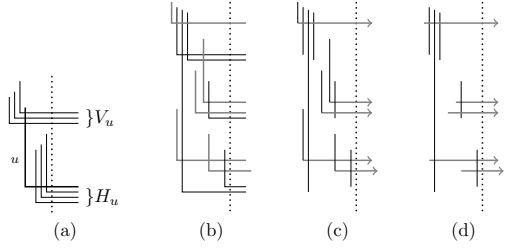

	\centering
	\scalebox{.7}{
		\begin{tabular}{ccccccc}
			\includegraphics[page=12]{figures.pdf}&\hspace{10pt}&\includegraphics[page=13]{figures.pdf}&\hspace{10pt}&\includegraphics[page=14]{figures.pdf}&\hspace{10pt}&\includegraphics[page=15]{figures.pdf}\\
			(a) & \hspace{10pt}& (b) & \hspace{10pt}& (c) & \hspace{10pt}& (d)
		\end{tabular} }
		\caption{A vertically-stabbed-\textsc{L} graph. (a) The sets $H_u$ and $V_u$ corresponding to a vertex $u$. (b) The gray \textsc{L}-paths belongs to $A_1$ and the black \textsc{L}-paths belongs to $A_2$, (c) The subproblem equivalent to \textsc{SRS}-problem. (d) The subproblem equivalent to \textsc{SSR}-problem. }\label{fig:example-Stab-L-Dom}
	\end{figure}
	
	For a vertex $u\in V(G)$, let $N[u]$ denote the closed neghbourhood of $u$ in $G$. For each vertex $u\in V(G)$, let $H_u=\{c\in N[u]\colon \textsc{L}_c~\text{intersects the horizontal segment of}~\textsc{L}_u\}$ and let $V_u$ denote the set $N(u)\setminus H_u$ (See Figure~\ref{fig:example-Stab-L-Dom}(a)). Based on these we have the following ILP (say $Q$) of the problem of finding a minimum dominating set of $G$. 
	\begin{center}
		\scalebox{0.85}{
		\begin{tabular}{|c|}
			\hline
			\begin{minipage}{.65\textwidth}
				\begin{equation}
				\begin{array}{ll@{}ll}
				\text{minimize}  & \displaystyle\sum\limits_{v\in V(G)} x_v &\\
				\text{subject to}& \displaystyle\sum\limits_{v\in H_u} x_v + \displaystyle\sum\limits_{v\in V_u} x_v \geq 1,  &\forall u\in V(G)\\
				x_v\in \{0,1\}, & \forall v\in V(G)
				\end{array}
				\end{equation}
			\end{minipage}\\
			$Q$\\
			\hline
		\end{tabular}}
	\end{center}

	The first step of our algorithm is to solve the relaxed LP formulation, say $Q_l$ of $Q$. Let $\textbf{Q}_l=\{x_v\colon v\in V(G)\}$ be an optimal solution of $Q_l$. Now we define the following sets. $$A_1=\left\{u\in V(G)\colon \displaystyle\sum\limits_{v\in H_u} x_v \geq \frac{1}{2} \right\}, A_2=\left\{u\in V(G)\colon \displaystyle\sum\limits_{v\in V_u} x_v \geq \frac{1}{2} \right\}$$ $$H=\bigcup\limits_{u\in A_1} H_u,V=\bigcup\limits_{u\in A_2} V_u$$ Based on these, we consider the following two integer programs $Q'$ and $Q''$. 
	\begin{center}
		\scalebox{0.85}{
			\begin{tabular}{|c|c|}
				\hline
				\begin{minipage}{.45\textwidth}
					\begin{equation}
					\begin{array}{ll@{}ll}
					\text{minimize}  & \displaystyle\sum\limits_{v\in H} x'_v \\
					\text{subject to}& \displaystyle\sum\limits_{v\in H_u} x'_v \geq 1,  &\forall u\in A_1\\                
					x'_v \in \{0,1\}, &v\in H
					\end{array}            
					\end{equation}
				\end{minipage}&
				\begin{minipage}{.45\textwidth}
					\begin{equation}
					\begin{array}{ll@{}ll}
					\text{minimize}  & \displaystyle\sum\limits_{v\in V} x''_v &\\
					\text{subject to}& \displaystyle\sum\limits_{v\in V_u} x''_v \geq 1,  &\forall u\in A_2\\
					x''_v \in \{0,1\}, &v\in V
					\end{array}
					\end{equation}
				\end{minipage}\\
				$Q'$&$Q''$\\
				\hline
			\end{tabular}}
		\end{center}

		Let $Q'_l$ and $Q''_l$ be the relaxed LP of $Q'$ and $Q''$ respectively. Clearly, the solutions of $Q'$ and $Q''$ gives a solution for $Q$. Hence $OPT(Q)\leq OPT(Q') + OPT(Q'')$. For each $x_v\in \textbf{Q}_l$, define $y_v=\min\{1,2x_v\}$ and define $\textbf{Y}_l=\{y_v\}_{x_v\in \textbf{Q}_l}$. Notice that $\textbf{Y}_l$ gives a solution to $Q'_l$ (and $Q''_l$). Therefore, $OPT(Q'_l) + OPT(Q''_l)\leq 4\cdot OPT(Q_l)$. We have the following lemma.
		
		\begin{lemma}\label{lem:stab-L-dom}
			$OPT(Q')\leq 2\cdot OPT(Q'_l)$ and $OPT(Q'')\leq 2\cdot OPT(Q''_l)$.
		\end{lemma}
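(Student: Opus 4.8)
The plan is to show each of the two inequalities by reducing the corresponding integer program to a problem we already understand: $Q'$ to the \textsc{SRS} problem and $Q''$ to the \textsc{SSR} problem. Then Lemma~\ref{lem:approx-LP-SRS} (for $Q'$) and Corollary~\ref{cor:SSR-2-approx} (for $Q''$) give exactly the claimed bound $OPT(\cdot)\leq 2\cdot OPT(\cdot_l)$, since in both cases the ILP optimum is within a factor $2$ of its relaxation, and the programs $Q',Q''$ are precisely the set-cover-type ILPs of those geometric instances. So the real work is building the two geometric instances and verifying that the constraint systems match.

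First I would handle $Q'$. For each $u\in A_1$, the set $H_u$ consists of \textsc{L}-paths whose horizontal segment meets the horizontal segment of $\textsc{L}_u$. Using assumptions (i)--(iii) on the representation, the horizontal segment of each \textsc{L}-path is a leftward-directed ray-like segment hitting the stabbing line $x=0$; I would extract from $\textsc{L}_u$ a short vertical ``probe'' segment (a tiny vertical segment at $u$'s corner region, or better, the portion of the vertical line $x=0$ near $u$) that is crossed by exactly the horizontal segments in $H_u$, and associate to each candidate \textsc{L}-path in $H$ its horizontal portion viewed as a leftward ray. Then ``$\textsc{L}_v$ intersects the horizontal segment of $\textsc{L}_u$'' becomes ``ray of $v$ stabs the segment of $u$'', so choosing $x'_v$ to cover all $u\in A_1$ is exactly an \textsc{SRS} instance with rays indexed by $H$ and segments indexed by $A_1$. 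Hence $OPT(Q')$ equals the \textsc{SRS} ILP optimum and $OPT(Q'_l)$ its LP relaxation, and Lemma~\ref{lem:approx-LP-SRS} finishes this case. (One must check disjointness of rays/segments can be assumed, or that the $2$-approximation argument of Lemma~\ref{lem:approx-LP-SRS}, which only uses the token-charging combinatorics, goes through without strict disjointness; I expect perturbation handles this.)

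For $Q''$, the set $V_u=N(u)\setminus H_u$ consists of neighbours of $u$ whose intersection with $\textsc{L}_u$ occurs on the vertical segment of $\textsc{L}_u$ — by (iii) these are \textsc{L}-paths whose horizontal segment crosses the vertical segment of $\textsc{L}_u$. Symmetrically to the above, I would turn each vertical segment of an \textsc{L}-path in $V$ into a vertical segment, and each horizontal segment of an \textsc{L}-path in $V$ into a leftward-directed ray (the horizontal arm extends leftward from the corner); then ``$\textsc{L}_v\in V_u$'' becomes ``the vertical segment of $u$ is stabbed by the leftward ray of $v$'', i.e. covering all $u\in A_2$ by choosing rays from $V$. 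That is an \textsc{SSR} instance, so $OPT(Q'')$ and $OPT(Q''_l)$ are the ILP and LP optima of an \textsc{SSR} instance, and Corollary~\ref{cor:SSR-2-approx} yields $OPT(Q'')\leq 2\cdot OPT(Q''_l)$.

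The main obstacle I anticipate is not the integrality-gap bound itself (which is quoted) but the geometric translation: verifying that the incidence structure of each ILP genuinely coincides with an \textsc{SRS} (resp.\ \textsc{SSR}) incidence structure. Specifically, one must argue that the horizontal arm of an \textsc{L}-path behaves like a leftward ray for the purpose of these intersections — extending it to infinity on the left does not create spurious intersections with the relevant probe/vertical segments, since all corners lie left of $x=0$ and the relevant segments sit at or right of the corners — and that the probe segments chosen for the vertices $u\in A_1$ (resp.\ $u\in A_2$) are crossed by exactly $H_u$ (resp.\ $V_u$) and nothing else. Care is also needed that $H$ and $V$, as index sets of the reduced instances, inherit the general-position assumptions (distinct $x$-coordinates, disjointness) used by the \textsc{KMN}-algorithm and the \textsc{SSR}-algorithm; a standard symbolic perturbation argument should suffice. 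Once these identifications are in place, the lemma is immediate.
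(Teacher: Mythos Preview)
Your overall plan matches the paper's: reduce $Q'$ and $Q''$ to geometric stabbing problems and quote the factor-$2$ integrality-gap results. Your treatment of $Q''$ is essentially correct (modulo two slips: the vertical segments should come from the paths in $A_2$, not from $V$, and the horizontal arm of an \textsc{L} runs \emph{rightward} from the corner; after a reflection the rays become leftward and it is indeed an \textsc{SSR} instance to which Corollary~\ref{cor:SSR-2-approx} applies).

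The real gap is in your reduction for $Q'$, and it stems from a misreading of $H_u$. By definition $H_u=\{c\in N[u]:\textsc{L}_c\text{ intersects the horizontal segment of }\textsc{L}_u\}$, i.e.\ the \emph{whole} path $\textsc{L}_c$ meets the horizontal arm of $\textsc{L}_u$; under assumption~(iii) this happens exactly when the \emph{vertical} arm of $\textsc{L}_c$ crosses the horizontal arm of $\textsc{L}_u$. It is not a condition on the horizontal arm of $\textsc{L}_c$ at all. Consequently there is no ``vertical probe'' attached to $u$ that is crossed by the horizontal ray of $v$ precisely when $v\in H_u$: membership in $H_u$ depends on the $x$-coordinate of $v$'s corner and on the \emph{top} endpoint of $v$'s vertical arm, neither of which your ray at height $y_v^{\text{corner}}$ can detect. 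The correct --- and much simpler --- translation is the direct one the paper uses: for each $u\in A_1$ take the horizontal arm of $\textsc{L}_u$ itself as a ray (it already reaches the stabbing line $x=0$, and every vertical arm lies at $x<0$, so extending it past $x=0$ creates no new incidences), and for each $v\in H$ take the vertical arm of $\textsc{L}_v$ as a vertical segment. Then $v\in H_u$ iff the segment of $v$ meets the ray of $u$, and one must choose \emph{segments} (from $H$) to stab all \emph{rays} (from $A_1$): this is an \textsc{SRS} instance, and Lemma~\ref{lem:approx-LP-SRS} gives $OPT(Q')\le 2\cdot OPT(Q'_l)$. Note also that in your version the roles are reversed (you select rays to hit segments), which is \textsc{SSR} rather than \textsc{SRS}; that does not affect the factor~$2$, but it does mean the correct citation would have been Corollary~\ref{cor:SSR-2-approx}, not Lemma~\ref{lem:approx-LP-SRS}.
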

		
		\begin{proof}
			Solving $Q'$ is equivalent to finding a minimum cardinality $D\subseteq H$ such that each vertex $u\in A_1$ is adjacent to some vertex in $D\cap H_u$. Recall that for each vertex $u\in A_1$ and $v\in H_u$, $L_v$ intersects the horizontal segment of $L_u$. Now consider the following sets. Let $R$ be the set of horizontal segments of the \textsc{L}-paths representing the vertices in $A_1$ and $S$ be the set of vertical segments of the \textsc{L}-paths representing the vertices in $H$ (See Figures~\ref{fig:example-Stab-L-Dom}(b) and~\ref{fig:example-Stab-L-Dom}(c) for an example).  Since all horizontal segments in $R$ intersect the $x=0$ vertical line and the $x$-coordinates of the vertical segments in $S$ is strictly less than $0$, we can consider the horizontal segments in $R$ as rightward directed rays. Hence, solving $Q'$ is equivalent to solving the ILP (say $\mathcal{E}$) of the problem of finding a minimum cardinality subset of $S$ (a set of vertical segments) that intersects all rays in $R$ (a set of rightward-directed rays). Hence solving $\mathcal{E}$ is equivalent to solving an \textsc{SRS} instance with $R$ and $S$ as input. By Lemma~\ref{lem:approx-LP-SRS}, we have that $$OPT(Q')=OPT(\mathcal{E})\leq 2\cdot OPT(\mathcal{E}_l)\leq 2\cdot OPT(Q'_l)$$ where $\mathcal{E}_l$ is the relaxed LP of $\mathcal{E}$. This proves the first part.
			
			For the second part, solving $Q''$ is equivalent to finding a minimum cardinality subset $D$ of $V$ such that each vertex $u\in A_2$ is adjacent to some vertex in $D\cap V_u$. Recall that, for each vertex $u\in A_2$ and $v\in V_u$, $L_v$ intersects the vertical segment of $L_u$. Now consider the following sets. Let $R$ be the set of horizontal segments of the \textsc{L}-paths representing the vertices in $V$ and $S$ be the set of vertical segments of the \textsc{L}-paths representing the vertices in $A_2$ (See Figures~\ref{fig:example-Stab-L-Dom}(b) and~\ref{fig:example-Stab-L-Dom}(d) for an example). Since all horizontal segments in $R$ intersect the $x=0$ vertical line and the $x$-coordinates of the vertical segments in $S$ is strictly less than $0$, we can consider the segments in $R$ as rightward-directed rays. Hence, solving $Q''$ is equivalent to solving the ILP (say $\mathcal{F}$) of the problem of finding a minimum cardinality subset of $R$ (a set of rightward-directed rays) that intersects all segments in $S$ (a set of vertical segments). Hence, solving $\mathcal{F}$ is equivalent to solving an \textsc{SSR} instance with $R$ and $S$ as inputs. By Corollary~\ref{cor:SSR-2-approx}, we have that $$OPT(Q'')=OPT(\mathcal{F})\leq 2\cdot OPT(\mathcal{F}_l)=2\cdot OPT(Q''_l)$$ where $\mathcal{F}_l$ is the relaxed LP of $\mathcal{F}$. This proves the second part. 
		\end{proof}
		
		\textbf{To complete the proof of Theorem~\ref{thm:stab-L-dom}}, notice that Lemma~\ref{lem:stab-L-dom} implies that solving $Q'$ (resp. $Q''$) is equivalent to solving \textsc{SRS} (resp. \textsc{SSR}) problem instance. Let $A$ be the union of the solutions returned by \textsc{KMN}-algorithm and \textsc{SSR}-algorithm, used to solve $Q'$ and $Q''$ respectively. Hence, $$|A|\leq 2(OPT(Q'_l)+OPT(Q''_l)) \leq 8\cdot OPT(Q_l) \leq 8\cdot OPT(Q)$$ Since the LP $Q_l$ consists of $n$ variables where $n=|V(G)|$, solving $Q_l$ takes $O(n^3\cdot T)$ time~\cite{gonzaga1989} where $T$ is the total number of bits required to encode the $(0,1)$-constraint matrix of $Q_l$. Solving both the SSR and SRS instances takes a total of $O(n\log n)$ time (Theorem~\ref{thm:SSR} and Result~\ref{res:KMN}) and therefore the total running time of the algorithm is $O(n^3\cdot T)$.

		\section{Hardness and Approximation for MDS on Unit-$B_k$ VPG-graphs}\label{sec:unit-VPG}

		\begin{figure}
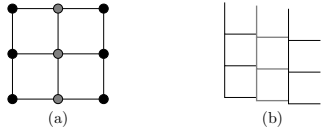

			\centering
			\scalebox{.6}{
				\begin{tabular}{c@{\hskip 1in}c}
					\includegraphics[page=10]{figures.pdf}&\includegraphics[scale=0.7,page=11]{figures.pdf}\\
					(a) & (b)
				\end{tabular}}
				\caption{ A unit \textsc{L}-representation (b) of (3,3)-grid (a).}\label{fig:example-grid}
			\end{figure}
			
			First we prove the NP-hardness for \textsc{MDS} problem on unit-$B_1$-VPG graphs. The \emph{$(h,w)$-grid} is the undirected graph $G$ with $V(G)=\{(x,y)\colon x,y\in\mathbb{Z}, 1\leq x\leq h, 1\leq y\leq w\}$ and $E(G)=\{(u,v)(x,y)\colon |u-x|+|v-y|=1\}$. A graph $G$ is a \emph{grid} graph if $G$ is an induced subgraph of $(h',w')$-grid for some positive integers $h',w'$. Given a grid graph $G$, the \textsc{Grid-Dominating-Set} problem is to find a minimum cardinality dominating set of $G$. We shall reduce the \textsc{Grid-Dominating-Set} problem to the \textsc{MDS} problem on unit-$B_1$-VPG graphs. We will be done by showing that for any positive integers $h,w$ the $(h,w)$-grid has a unit-$B_1$-VPG representation. Let $G$ be a $(h,w)$-grid and $\epsilon=\frac{1}{hw}$. For each $(x,y)\in V(G)$ consider the \textsc{L}-path $\textsc{L}_{(x,y)}$ such that top endpoint of vertical segment of $\textsc{L}_{(x,y)}$ is $(x,y-\epsilon(x-1))$ (See Figure~\ref{fig:example-grid}). It is easy to verify that the set of \textsc{L}-paths $\mathcal{R}=\{\textsc{L}_{(x,y)}\colon (x,y)\in V(G)\}$ is a unit-$B_1$-VPG representation of $G$. Hence, we have the following theorem.

			\begin{theorem}
				It is NP-Hard to solve the \textsc{MDS} problem on unit-$B_1$-VPG graphs.
			\end{theorem}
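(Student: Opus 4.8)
The plan is to reduce from \textsc{Grid-Dominating-Set}, which is known to be NP-hard, to \textsc{MDS} on unit-$B_1$-VPG graphs. Since an arbitrary grid graph $G$ is an induced subgraph of some $(h,w)$-grid, and since \textsc{MDS} is an inherited-by-induced-subgraph type problem only in a delicate sense, the cleanest route is first to show that \emph{every} $(h,w)$-grid itself admits a unit-$B_1$-VPG representation, and then observe that induced subgraphs of unit-$B_1$-VPG graphs are again unit-$B_1$-VPG (just delete the corresponding \textsc{L}-paths). Thus it suffices to construct, for each $(h,w)$-grid, an explicit unit-$B_1$-VPG representation; the reduction is then trivially polynomial-time since the coordinates are polynomially bounded.

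First I would fix $\epsilon = \frac{1}{hw}$ and, for each vertex $(x,y)$, place the \textsc{L}-path $\textsc{L}_{(x,y)}$ whose vertical segment has top endpoint at $(x, y-\epsilon(x-1))$, with both the horizontal and vertical legs of unit length (say the vertical leg going downward and the horizontal leg going leftward from the bottom corner, or whichever orientation makes the picture in Figure~\ref{fig:example-grid} consistent). The small vertical shift $\epsilon(x-1)$ depending on the column index $x$ is the key trick: it staggers the paths in successive columns so that a path in column $x$ meets a path in column $x+1$ exactly when their $y$-coordinates agree (a horizontal grid edge), while the unit length of the legs ensures that a path in column $x$ meets the path directly above/below it in the same column exactly when their $y$-coordinates differ by $1$ (a vertical grid edge).

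The substance of the argument is the verification that this collection of \textsc{L}-paths realizes exactly the edge set of the $(h,w)$-grid — i.e., $\textsc{L}_{(x,y)}$ and $\textsc{L}_{(x',y')}$ intersect if and only if $|x-x'| + |y-y'| = 1$. I would split this into the two cases $|x - x'| = 1, y = y'$ (horizontal adjacency) and $x = x', |y - y'| = 1$ (vertical adjacency), checking in each case that the relevant legs overlap, and then argue the converse: non-adjacent pairs do not intersect. The non-adjacency check is where the choice $\epsilon = \frac{1}{hw}$ earns its keep — one must confirm that the cumulative stagger over all $h$ columns stays strictly below $1$, so that paths two or more rows/columns apart never accidentally touch, and that two paths in non-consecutive columns with the same $y$ are separated because their horizontal legs (each of unit length) lie at disjoint $x$-ranges. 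This bookkeeping with the $\epsilon$-perturbations is the main obstacle: it is entirely elementary but requires care to rule out every spurious intersection. Once the representation is verified, the theorem follows immediately, since a polynomial-time algorithm (even an exact one, or any nontrivial approximation up to the known hardness threshold) for \textsc{MDS} on unit-$B_1$-VPG graphs would solve \textsc{Grid-Dominating-Set}.
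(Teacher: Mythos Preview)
Your proposal is correct and follows essentially the same approach as the paper: reduce from \textsc{Grid-Dominating-Set}, use the same perturbation $\epsilon=\frac{1}{hw}$, and place the top endpoint of $\textsc{L}_{(x,y)}$ at $(x,\,y-\epsilon(x-1))$. The paper's own argument is in fact terser than yours --- it simply asserts that the resulting collection is a unit-$B_1$-VPG representation of the $(h,w)$-grid and leaves the adjacency verification to the reader --- whereas you spell out the case analysis and the role of the bound $\epsilon h < 1$; your added remark that induced subgraphs of unit-$B_1$-VPG graphs remain unit-$B_1$-VPG (needed to pass from full grids to arbitrary grid graphs) is also left implicit in the paper.
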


In this section, we describe our approximation algorithm for \textsc{MDS} on Unit-$B_k$ VPG-graphs and prove Theorem~\ref{thm:unit-bk-VPG}. First we give an overview of the algorithm below.

\medskip

\noindent \textbf{Overview of the algorithm for MDS on Unit-$B_k$-VPG graphs:} First, we solve the relaxed LP formulation of the ILP of \textsc{MDS} problem on the input Unit-$B_k$-VPG graph $G$ and create $(k+1)^2$ many subproblems each of which are equivalent to a \textsc{Proper-Seg-Dom} problem (defined below). We shall show that it is possible to get constant factor approximate solutions for each of these subproblems in polynomial time and take the union of all the solutions to get a set $S$. We shall conclude by showing that $S$ is a $O(k^4)$-approximate solution for \textsc{MDS} problem on $G$ and $S$ can be computed in polynomial time.

\medskip
In the \textsc{Proper-Seg-Dom} (\emph{Proper Orthogonal Segment Domination}) problem the input is a set $V$ of vertical segments, a set $H$ of horizontal segments such that the sets of intervals obtained by projecting the segments in $H$ (resp. $V$) onto the $x$-axis (resp. $y$-axis) are proper (a set of intervals is \emph{proper} if no two intervals in the set contain each other). The objective is to select a minimum cardinality subset of $H\cup V$ that intersects all segments in $H\cup V$. 

In Section~\ref{sec:proper-seg-dom} we shall give a constant factor approximation for the \textsc{Proper-Seg-Dom} problem and in Section~\ref{sec:unit-b-k-VPG-approx-complete} we shall describe the approximation algorithm for \textsc{MDS} on Unit-$B_k$ VPG-graphs and use our result on \textsc{Proper-Seg-Dom} to prove Theorem~\ref{thm:unit-bk-VPG}.

\subsection{Approximation for PROPER-SEG-DOM problem}\label{sec:proper-seg-dom}

Let $V$ (resp. $H$) be a set of vertical (resp. horizontal) segments, $\mathcal{I}_V$ (resp. $\mathcal{I}_H$) be the projections of the segments in $V$ (resp. $H$) onto the $y$-axis (resp. $x$-axis) and both $\mathcal{I}_V$ and $\mathcal{I}_H$ are proper sets of intervals. The \textsc{Proper-Seg-Dom} problem is to find a minimum cardinality subset of $H\cup V$ that intersects all segments in $H\cup V$. In this section, we shall give a polynomial time $18$-approximation algorithm to solve the \textsc{Proper-Seg-Dom} problem where $n=|H|+|V|$. Throughout this section, we let $\mathcal{PSD}(H,V)$ denote the \textsc{Proper-Seg-Dom} problem instance with $H$ and $V$ as input. In order to solve \textsc{Proper-Seg-Dom}, we consider the following problems.

\medskip

\noindent \textbf{Overview of the algorithm:} First, we solve the relaxed LP formulation of the ILP of the input \textsc{Proper-Seg-Dom} problem and create two subproblems. We shall show that one of the subproblem is equivalent to a \emph{Subset Proper Interval Domination} problem (defined below) and the other is equivalent to a \emph{Proper Orthogonal Segment Stabbing} problem (defined below). We give constant factor approximate solution for each these two problems and show that the union of the solutions of these subproblems gives a solution for the input \textsc{Proper-Seg-Dom} problem instance which is at most $18$ times the optimal solution.

\medskip

In the \textbf{\emph{Subset Proper Interval Domination}} problem (\textsc{SPID}) the inputs are a proper set of intervals $S$ and a set $T\subseteq S$. The \textsc{SPID} problem is to select a minimum cardinality subset of $S$ that intersects all intervals in $T$. In the \textbf{\emph{Proper Orthogonal Segment Stabbing}} problem (\textsc{POSS}) inputs are a set $S$ of horizontal segments a set $T$ of vertical segments such that the set of intervals obtained by projecting the segments in $S$ onto the $x$-axis is a proper set of intervals. The \textsc{POSS} problem is to select a minimum cardinality subset of $S$ that intersects all vertical segments in $T$. For a set of horizontal segments $S$ and a set of vertical segments $T$, $\mathcal{P}(T,S)$ shall denote the \textsc{POSS} instance. Let $S$ be a proper set of intervals and $T$ be a subset of $S$. Then $\mathcal{SP}(T,S)$ shall denote the \textsc{SPID} instance.

\medskip

In Section~\ref{subsubsec:SPID-POSC}, we give approximation algorithms to solve \textsc{SPID}-problems and \textsc{POSS}-problems. In Section~\ref{subsubsec:Proper-Seg-Dom}, we give the approximation algorithm for the $\mathcal{PSD}(H,V)$ problem.

\subsubsection{Approximation algorithms for SPID and POSS problem}\label{subsubsec:SPID-POSC}

Lemma~\ref{lem:PID-integrality-gap} and Lemma~\ref{lem:approx-POSC} shall compare the integral cost with the fractional cost of the \textsc{SPID}-problem and \textsc{POSS}-problem respectively. We shall use Corollary~\ref{cor:hor-vertical-SPID} and Corollary~\ref{cor:hor-vertical-POSC} in Section~\ref{subsubsec:Proper-Seg-Dom}.

\begin{lemma}\label{lem:PID-integrality-gap}
	Let $S$ be a proper set of intervals and $T\subseteq S$. The cost of an optimal solution for the relaxed LP of $\mathcal{SP}(T,S)$ equals to the cost of an optimal solution for the ILP of $\mathcal{SP}(T,S)$. 
\end{lemma}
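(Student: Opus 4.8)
The plan is to show that the constraint matrix of the natural LP for $\mathcal{SP}(T,S)$ is an \emph{interval matrix} (equivalently, its columns have the consecutive-ones property), hence totally unimodular, so every vertex of the LP polytope is integral and the LP optimum equals the ILP optimum. Concretely, write the ILP as $\min \sum_{w\in S} x_w$ subject to $\sum_{w\in N(u)} x_w \ge 1$ for every $u\in T$, $x_w\in\{0,1\}$, where $N(u)\subseteq S$ is the set of intervals of $S$ that intersect $u$. I first want to argue that for each $u\in T$, the set $N(u)$, when $S$ is ordered appropriately, forms a contiguous block.

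The key step is to choose the right linear order on $S$. Since $S$ is a \emph{proper} set of intervals (no interval contains another), ordering the intervals of $S$ by left endpoint is the same as ordering them by right endpoint; fix this common order $I_1, I_2, \dots, I_{|S|}$ (endpoints may be assumed distinct after a perturbation). I would then prove: for any interval $u$ (in particular any $u\in T\subseteq S$), the set $\{ j : I_j \cap u \neq \emptyset \}$ is an interval of indices $\{a, a+1, \dots, b\}$. The reason is the standard Helly property on the line together with properness: if $I_a$ and $I_b$ both meet $u$ with $a<b$, and $I_c$ is any interval with $a<c<b$, then $I_c$'s left endpoint lies between those of $I_a$ and $I_b$ and $I_c$'s right endpoint lies between those of $I_a$ and $I_b$ (by the coinciding orders), so $I_c$ cannot ``jump over'' the connected set $u$ — it must intersect $u$. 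Thus each row of the constraint matrix (indexed by $u\in T$) has its $1$'s in consecutive columns, i.e. the matrix has the consecutive-ones property on columns read in the order $I_1,\dots,I_{|S|}$, hence it is an interval matrix and therefore totally unimodular.

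Given total unimodularity, the polyhedron $\{x \ge 0 : Ax \ge \mathbf{1}\}$ has only integral vertices, and since the objective is to minimize a nonnegative linear functional over a pointed polyhedron, an optimal vertex solution of the relaxed LP is integral; clipping coordinates to $\le 1$ is unnecessary since any optimal solution already has all coordinates in $\{0,1\}$ at a vertex (a coordinate exceeding $1$ could be lowered, preserving feasibility and not increasing cost). Hence $OPT(\mathcal{SP}(T,S)_l) = OPT(\mathcal{SP}(T,S))$, and also $OPT(\mathcal{SP}(T,S)_l) \le OPT(\mathcal{SP}(T,S))$ trivially, giving equality. The only mild subtlety — the part I would be most careful about — is justifying the consecutive-ones property cleanly: one must use properness of $S$ in an essential way (for a non-proper family the ordering by left endpoint need not coincide with the ordering by right endpoint, and the block property can fail), and one must handle the degenerate case where the sets $N(u)$ could be empty (but the instance is assumed feasible, so each $N(u)\neq\emptyset$, and an empty block is vacuously an interval anyway). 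Everything else is the textbook fact that interval matrices are totally unimodular.
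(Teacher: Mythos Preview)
Your proposal is correct and follows essentially the same approach as the paper: order the intervals of $S$ by right endpoint (which, by properness, coincides with the order by left endpoint), show that each row of the constraint matrix has consecutive ones under this column ordering, and invoke total unimodularity of interval matrices to conclude that the LP optimum is integral. The paper's consecutive-ones argument is the same ``if $t$ meets $s_i$ and $s_{j'}$ with $i<j<j'$ then $t$ meets $s_j$'' step you outline.
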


\begin{proof}
	For an interval $v\in S$, let $l(v)$ and $ r(v)$ denote the left and right endpoints. Let $s_1,s_2,\ldots,s_k$ be the intervals in $S$ sorted in the ascending order of the right endpoints. Hence, $r(s_1)< r(s_2)< \ldots < r(s_k)$ and as no two intervals in $S$ contain each other, we have $l(s_1) < l(s_2) <\ldots l(s_k)$. For an interval $x\in T$, let $N(x)$ denote the set of intervals in $S$ that intersect $x$. Now consider the following ILP (say $Q$) of $\mathcal{SP}(T,S)$.
	
	\begin{center}
		\begin{tabular}{|c|}
			\hline
			\begin{minipage}{.65\textwidth}
				\begin{equation}
				\begin{array}{ll@{}ll}
				\text{minimize}  & \displaystyle\sum\limits_{v\in S} x_v &\\
				\text{subject to}& \displaystyle\sum\limits_{v\in N(u)} x_v \geq 1,  &\forall u\in T\\
				x_v\in \{0,1\}, & \forall v\in S
				\end{array}
				\end{equation}
			\end{minipage}\\
			$Q$\\
			\hline
		\end{tabular}
	\end{center}

	Let $\mathcal{M}$ be a coefficient matrix of $Q$ such that the $i^{th}$ column of $\mathcal{M}$ corresponds to the variable corresponding to $s_i\in S$. Each row of $\mathcal{M}$ have the interval property (i.e the set of 1's are consecutively in each row). To see this consider any three intervals $\{s_i,s_j,s_{j'}\}\subseteq S$ such that $i<j<j'$ and any interval $t\in T$ such that $t$ intersects both $s_i$ and $s_{j'}$. Since $r(s_i) < r(s_j) < r(s_{j'})$ and $l(s_i)<l(s_j) < l(s_{j'})$, $t$ must intersect $s_{j}$. Therefore, $\mathcal{M}$ is a totally unimodular matrix~\cite{schrijver1998}. Thus any optimal solution of the relaxed LP of $\mathcal{SP}(T,S)$ is integral and therefore an optimal solution for the ILP of $\mathcal{SP}(T,S)$. This completes the proof.
\end{proof}

For a proper set of intervals $S$ and $T\subseteq S$, an optimal solution of $\mathcal{SP}(T,S)$ can be computed in $O(n\log n)$ time~\cite{chang1998}. We shall use the following corollary whose proof follows from Lemma~\ref{lem:PID-integrality-gap}.

\begin{corollary}\label{cor:hor-vertical-SPID}
	Let $X$ (resp. $Y$) be a set of horizontal (resp. vertical) segments, $\mathcal{I}_X$ (resp. $\mathcal{I}_Y$) be the projections of the segments in $X$ (resp. $Y$) onto the $x$-axis (resp. $y$-axis) and both $\mathcal{I}_X$ and $\mathcal{I}_Y$ are proper sets of intervals. For $X'\subseteq X$ and $Y'\subseteq Y$, let $\mathcal{A}$ be the ILP of finding a minimum cardinality subset $D$ of $X\cup Y$ such that every horizontal (resp. vertical) segment in $X'$ (resp. $Y'$) intersects at least one horizontal (resp. vertical) segment in $D\cap X$ (resp. $D\cap Y$). Then $OPT(\mathcal{A})=OPT(\mathcal{A}_l)$ where $\mathcal{A}_l$ is the relaxed LP of $\mathcal{A}$. Moreover, $OPT(\mathcal{A})$ can be computed in $O(n\log n)$ time where $n=|X|+|Y|$.
\end{corollary}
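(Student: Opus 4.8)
The plan is to derive Corollary~\ref{cor:hor-vertical-SPID} from Lemma~\ref{lem:PID-integrality-gap} by exhibiting the ILP $\mathcal{A}$ as essentially two decoupled \textsc{SPID} instances — one living entirely among the horizontal segments and one entirely among the vertical segments. First I would observe that in $\mathcal{A}$, a horizontal segment in $X'$ can only be ``dominated'' (intersected) by a horizontal segment of $D\cap X$, and a vertical segment in $Y'$ only by a vertical segment of $D\cap Y$; the constraint matrix of $\mathcal{A}$ therefore splits into a block-diagonal form, with the $X$-block being the constraint matrix of $\mathcal{SP}(X', \mathcal{I}_X)$ and the $Y$-block being that of $\mathcal{SP}(Y', \mathcal{I}_Y)$ (identifying each horizontal segment with its $x$-projection interval and each vertical segment with its $y$-projection interval, using that $\mathcal{I}_X$ and $\mathcal{I}_Y$ are proper). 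Crucially, two horizontal segments whose $x$-projections are nested-free intervals still need not intersect — they could be at different heights — but the point is weaker: the constraint matrix of $\mathcal{A}$ restricted to $X$ is a \emph{submatrix} (selecting certain rows) of the interval-property matrix $\mathcal{M}$ from the proof of Lemma~\ref{lem:PID-integrality-gap}, hence still has the consecutive-ones property, hence is still totally unimodular; likewise for the $Y$-block. A block-diagonal matrix whose blocks are each totally unimodular is totally unimodular, so the full constraint matrix of $\mathcal{A}$ is totally unimodular and $OPT(\mathcal{A}) = OPT(\mathcal{A}_l)$.

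Wait — I should be careful about one subtlety that could be the main obstacle: in $\mathcal{A}$, the neighbourhood $N(x)$ of a horizontal segment $x \in X'$ is the set of horizontal segments in $X$ that \emph{actually intersect $x$ in the plane}, not merely the set whose $x$-intervals overlap $x$'s $x$-interval. So the row of the constraint matrix for $x$ is \emph{not} literally the row one would get in $\mathcal{SP}(X',\mathcal{I}_X)$. The fix is to redo the consecutive-ones argument directly: sort the segments of $X$ by right endpoint of their $x$-projection; by properness this also sorts them by left endpoint. Then for $s_i, s_j, s_{j'}$ with $i < j < j'$ and a horizontal segment $t \in X'$ intersecting both $s_i$ and $s_{j'}$, I would argue $t$ must intersect $s_j$. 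This needs that all of $s_i, s_j, s_{j'}, t$ are horizontal segments lying on some common horizontal line — which is false in general. So the honest statement is that the corollary's hypothesis must be that the domination is genuinely ``interval-like''; in the intended application (Section~\ref{subsubsec:Proper-Seg-Dom}) the relevant horizontal segments will be collinear (or the relation will reduce to interval containment/overlap), and then the consecutive-ones property holds. I would therefore phrase the proof as: the incidence structure of $\mathcal{A}$ restricted to $X$ is isomorphic to an instance $\mathcal{SP}(X', \mathcal{I}_X')$ for an appropriate proper interval family (the projections, when segments are collinear, or the covering intervals otherwise), and similarly for $Y$; then invoke Lemma~\ref{lem:PID-integrality-gap} on each and combine via block-diagonality.

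The cleaner route, and the one I would actually take, avoids re-examining geometry: since the constraint matrix $M_{\mathcal{A}}$ of $\mathcal{A}$ is block-diagonal with an $X$-block $M_X$ and a $Y$-block $M_Y$, it suffices that each block is totally unimodular. For $M_X$: each row corresponds to some $x \in X'$ and has $1$'s exactly in columns indexed by horizontal segments of $X$ intersecting $x$; running the argument of Lemma~\ref{lem:PID-integrality-gap} verbatim (order $X$ by right $x$-coordinate, which by properness of $\mathcal{I}_X$ agrees with the order by left $x$-coordinate, and note that ``$t$ intersects $s$'' for two horizontal segments reduces to overlap of their $x$-projections together with a shared $y$-coordinate — and in $\mathcal{A}$ this relation has the consecutive-ones form because the sandwiched segment $s_j$ is forced to overlap $t$'s $x$-interval, which is the only thing the consecutive-ones property needs) shows $M_X$ has consecutive ones in each row, hence is totally unimodular. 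The same holds for $M_Y$. Therefore $M_{\mathcal{A}}$ is totally unimodular, every vertex of the LP polytope $\mathcal{A}_l$ is integral, and $OPT(\mathcal{A}) = OPT(\mathcal{A}_l)$.

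For the running time: the optimal integral solution decomposes as the union of an optimal solution to $\mathcal{SP}(X', \mathcal{I}_X')$ (on horizontal segments) and one to $\mathcal{SP}(Y', \mathcal{I}_Y')$ (on vertical segments); each can be computed in $O(|X|\log|X|)$ and $O(|Y|\log|Y|)$ time respectively by the algorithm of~\cite{chang1998} cited after Lemma~\ref{lem:PID-integrality-gap}, so the total is $O(n\log n)$ with $n = |X| + |Y|$. The only real obstacle is making the reduction-to-intervals precise enough that \cite{chang1998} and Lemma~\ref{lem:PID-integrality-gap} apply on the nose; once the block structure and the consecutive-ones observation for each block are nailed down, the rest is bookkeeping.
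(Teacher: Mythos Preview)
Your high-level strategy --- decomposing $\mathcal{A}$ into independent $X$- and $Y$-blocks, recognising each as an \textsc{SPID} instance, and invoking Lemma~\ref{lem:PID-integrality-gap} on each --- is exactly the paper's route (the paper gives no separate proof and simply asserts the corollary follows from Lemma~\ref{lem:PID-integrality-gap}). The subtlety you flag about geometric intersection of parallel segments is real and the paper glosses over it, but your resolution is wrong. You claim $M_X$ has the consecutive-ones property because ``the sandwiched segment $s_j$ is forced to overlap $t$'s $x$-interval, which is the only thing the consecutive-ones property needs.'' That is not what consecutive-ones needs: the entry in column $s_j$ must be $1$, i.e., $t$ must \emph{geometrically} intersect $s_j$, which for two horizontal segments also requires a shared $y$-coordinate. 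Concretely, take $s_1,s_2,s_3\in X$ with $x$-projections $[0,3],[1,4],[2,5]$ and $t\in X'\subseteq X$ with $x$-projection $[-1,2.5]$ (so $\mathcal{I}_X$ is proper); place $t,s_1,s_3$ at height $0$ and $s_2$ at height $1$. With columns ordered by right $x$-endpoint $(t,s_1,s_2,s_3)$, the row for $t$ is $(1,1,0,1)$ --- consecutive-ones fails, and $M_X$ need not be totally unimodular.

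The fix is not to force consecutive-ones for planar intersection but to read ``intersects'' in the corollary as overlap of the projections in $\mathcal{I}_X$ (respectively $\mathcal{I}_Y$). Under that reading each block is literally an \textsc{SPID} instance $\mathcal{SP}(\mathcal{I}_{X'},\mathcal{I}_X)$ and $\mathcal{SP}(\mathcal{I}_{Y'},\mathcal{I}_Y)$, Lemma~\ref{lem:PID-integrality-gap} applies verbatim to each, the two ILPs/LPs sum since the blocks share no variables, and the $O(n\log n)$ bound follows from~\cite{chang1998} applied to each block separately. Your first paragraph already had this clean route; the detour through a geometric consecutive-ones argument is both unnecessary and incorrect.
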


\begin{lemma}\label{lem:approx-POSC}
	Let $S$ be a set of horizontal segments, $T$ be a set of vertical segments and $\mathcal{I}_S$ be the projections of the horizontal segments in $S$ onto the $x$-axis. Let $\mathcal{I}_S$ be a proper set of intervals. Then the cost of an optimal solution for the ILP of $\mathcal{P}(T,S)$ is at most $8$ times the cost of an optimal solution for the relaxed LP of $\mathcal{P}(T,S)$. 
\end{lemma}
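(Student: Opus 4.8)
The plan is to mimic the structure of the analysis used for the \textsc{SSR} problem in Section~\ref{sec:SSR} and for the \textsc{SRS} problem in Lemma~\ref{lem:approx-LP-SRS}: design a (possibly token-augmented) greedy algorithm for $\mathcal{P}(T,S)$, then charge the cost of the heuristic solution against the fractional optimum of the relaxed LP via a per-constraint inequality together with a bounded-overlap argument. Concretely, write the ILP $Q$ of $\mathcal{P}(T,S)$ with a $0/1$ variable $x_h$ for each horizontal segment $h\in S$ and a covering constraint $\sum_{h\in N(t)}x_h\ge 1$ for every vertical segment $t\in T$, where $N(t)$ is the set of horizontal segments of $S$ stabbing $t$. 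First I would fix an optimal fractional solution $\textbf{X}=\{x_h\}$ of the relaxed LP $Q_l$, run the algorithm to get an integral feasible set $S'$, and show $|S'|\le 8\sum_{h\in S}x_h$.

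The key structural handle is that $\mathcal{I}_S$ is a \emph{proper} set of intervals: no $x$-projection of a horizontal segment of $S$ is nested inside another. First I would exploit this to partition the selection process. For a vertical segment $t$ with $x$-coordinate $\xi(t)$ and $y$-extent $[y_1(t),y_2(t)]$, a horizontal segment $h\in N(t)$ has $y$-coordinate inside $[y_1(t),y_2(t)]$ and $x$-projection containing $\xi(t)$; by properness the segments of $N(t)$ can be linearly ordered, e.g. by their right endpoints, and this order coincides with the order by left endpoints. I would then process the vertical segments in $T$ in a sweep and, at each step where the current vertical segment $t$ is "critical" (not yet stabbed by the partial solution), add to $S'$ a constant number of extremal horizontal segments of $N(t)$ — the one reaching farthest up, the one reaching farthest down, the one reaching farthest right, and the one reaching farthest left — and set their tokens equal to $N(t)$. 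The key inequality is then, for each $z\in S'$ chosen because of some critical vertical segment $t_z$, that $x_z=1\le \sum_{h\in N(t_z)}x_h=\sum_{h\in T_z}x_h$, exactly as in Inequality~\ref{ineq:1}. Summing over $z\in S'$ and bounding how many tokens any fixed horizontal segment $h$ can belong to — the extremality choices and the properness of $\mathcal{I}_S$ should cap this at $4$ — yields $|S'|=\sum_{z\in S'}\sum_{h\in T_z}x_h\le 4\sum_{h\in S}x_h$, and a further factor-$2$ slack from using two independent extremal choices (top/bottom versus left/right, handled in two phases as in the \textsc{SRS} analysis) gives the claimed bound of $8$.

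The main obstacle I anticipate is the bounded-overlap argument: unlike the clean "at most two tokens" statement for \textsc{SSR} (Lemma~\ref{lem:at-most-two-copy}) and \textsc{SRS} (Observation~\ref{obs:stab-L-trivial-2}), here each critical vertical segment contributes up to four extremal horizontal segments, and I must argue that a single horizontal segment $h$ cannot be "the extremal one" for too many distinct critical vertical segments. The argument will rely on the observation that once $h$ is selected it stabs — hence removes from further consideration — all vertical segments it intersects, so the critical vertical segments that later cause $h$ to be re-selected must be "hidden behind" already-chosen segments; properness of $\mathcal{I}_S$ forces the geometry of these hidden vertical segments to be nested or monotone, which limits the count. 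Making this precise (identifying exactly which two or four phases of extremal selection are needed, and verifying the token-overlap bound in each phase) is where the real work lies; the LP-charging skeleton and the final arithmetic $|S'|\le 8\cdot OPT(\mathcal{P}(T,S)_l)$ are then routine, following Lemma~\ref{lem:approx-LP-SRS} verbatim.
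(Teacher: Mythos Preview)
Your approach is genuinely different from the paper's, and the obstacle you flag yourself---the bounded-overlap argument---is a real gap, not a routine step. Your charging scheme needs, for each fixed $h\in S$, a constant bound on the number of selected $z\in S'$ with $h\in T_z=N(t_z)$. The observation that ``once $h$ is selected it removes all vertical segments it intersects'' only bounds how often $h$ itself is \emph{selected}; it says nothing about how many \emph{other} critical vertical segments $t_z$ (triggering the selection of some $z\neq h$) have $h\in N(t_z)$. A single $h$ can intersect many vertical segments at different $x$-coordinates, each becoming critical at a different step with a different extremal segment chosen, and properness of $\mathcal{I}_S$ does not by itself cap this count. Your ``farthest left/right'' extremal choices are also ill-specified: all segments in $N(t)$ contain $\xi(t)$, and by properness the one with leftmost left endpoint is the one with leftmost right endpoint, so it is unclear what future coverage such a choice buys.

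The paper sidesteps this entirely by \emph{reducing to} \textsc{SSR} rather than building a new token analysis. Properness of $\mathcal{I}_S$ yields a finite stabbing set $P=\{p_1<\cdots<p_t\}$ with every interval of $\mathcal{I}_S$ containing exactly one $p_i$. This partitions $T$ into vertical strips, and for $v\in T$ in the strip $(p_i,p_{i+1})$ the set $N(v)$ splits into $S_v^{\text{left}}$ (segments crossing $x=p_i$) and $S_v^{\text{right}}$ (crossing $x=p_{i+1}$), disjoint by the exactly-one property. One solves the relaxed LP once, thresholds each $v$ into the left or right subproblem according to which half carries mass $\ge\frac12$ (so the two fractional sub-optima sum to at most $4\cdot OPT(W_l)$), and observes that the left subproblem decomposes into independent \textsc{SSR} instances, one per strip, because each horizontal segment crosses a unique $p_i$; likewise for the right subproblem. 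Corollary~\ref{cor:SSR-2-approx} gives integrality gap at most $2$ on each piece, and $2\cdot 4=8$. If you try to push your direct greedy through, you will likely be forced to rediscover this strip decomposition to get any overlap bound at all.
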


\begin{proof}
	Without loss of generality, we assume that the segments in $S$ and $T$ are in ``general position" i.e. $x$-coordinate of any vertical segment in $T$ is distinct from the $x$-coordinates of the left and right endpoints of any interval in $\mathcal{I}_S$. Since no two interval in $\mathcal{I}_S$ contain each other, we have a set $P$ of real numbers such that each interval in $\mathcal{I}_S$ contains exactly one real number from $P$. (To see this, consider the $x$-coordinates of the right endpoints of the intervals in the maximum cardinality subset of $\mathcal{I}_S$ with pairwise non-intersecting intervals which is obtained using the greedy algorithm~\cite{kleinberg2006}). Add in $P$ two more dummy values $q,q'$ which are not contained in any interval in $\mathcal{I}_S$ and $q$ (resp. $q'$) is less than (resp. greater than) that of all values in $P$. Let $p_1,p_2,\ldots,p_t$ be the values in $P$ sorted in the ascending order (notice that $p_1=q$ and $p_t=q'$). For each $i\in\{1,2,\ldots,t-1\}$, let $T_i$ denote the vertical segments of $T$ that lies inside the strip bounded by the lines $y=p_i$ and $y=p_{i+1}$. Due to our general position assumption for any $i\neq j$, $T_i$ and $T_j$ are disjoint. For each $i\in \{1,2,\ldots,t-1\}$, and each vertical segment $v\in T_i$, let $S_v^{left}$ (resp. $S_v^{right}$) be the subset of $S$ that intersects $v$ and the line $y=p_i$ (resp. $y=p_{i+1}$). Since any interval in $\mathcal{I}_S$ contains exactly one value from $P$ and therefore from $\{p_i,p_{i+1}\}$, $S_v^{left}\cap S_v^{right}=\emptyset$, for each vertical segment $v\in T$. Based on these we have the following ILP (say $W$) of $\mathcal{P}(T,S)$ problem.
	
	\begin{center}
		\begin{tabular}{|c|}
			\hline
			\begin{minipage}{.65\textwidth}
				\begin{equation}
				\begin{array}{ll@{}ll}
				\text{minimize}  & \displaystyle\sum\limits_{v\in S} x_v &\\
				\text{subject to}& \displaystyle\sum\limits_{v\in S^{left}_u} x_v + \displaystyle\sum\limits_{v\in S^{right}_u} x_v \geq 1,  &\forall u\in T\\
				x_v\in \{0,1\}, & \forall v\in S
				\end{array}
				\end{equation}
			\end{minipage}\\
			$W$\\
			\hline
		\end{tabular}
	\end{center}
	
	First step of our algorithm is to solve the relaxed LP formulation (say $W_l$) of $W$. Let $\textbf{W}_l=\{x_v\colon v\in S\}$ be an optimal solution of $W_l$. Consider the following sets. $$A_1=\left\{u\in T\colon \displaystyle\sum\limits_{v\in S^{left}_u} x_v \geq \frac{1}{2} \right\}, A_2=\left\{u\in T\colon \displaystyle\sum\limits_{v\in S^{right}_u} x_v \geq \frac{1}{2} \right\}$$ $$L=\bigcup\limits_{v\in A_1} S^{left}_v, R=\bigcup\limits_{v\in A_2} S^{right}_v$$ Based on these, we consider the following two integer programs $W'$ and $W''$.

	\begin{center}
		\begin{tabular}{|c|c|}
			\hline
			\begin{minipage}{.45\textwidth}
				\begin{equation}
				\begin{array}{ll@{}ll}
				\text{minimize}  & \displaystyle\sum\limits_{v\in L} x'_v \\
				\text{subject to}& \displaystyle\sum\limits_{v\in S^{left}_u} x'_v \geq 1,  &\forall u\in A_1\\                
				x'_v \in \{0,1\}, &v\in L
				\end{array}            
				\end{equation}
			\end{minipage}&
			\begin{minipage}{.45\textwidth}
				\begin{equation}
				\begin{array}{ll@{}ll}
				\text{minimize}  & \displaystyle\sum\limits_{v\in R} x''_v \\
				\text{subject to}& \displaystyle\sum\limits_{v\in S^{right}_u} x''_v \geq 1,  &\forall u\in A_2\\                
				x''_v \in \{0,1\}, &v\in R
				\end{array}
				\end{equation}
			\end{minipage}\\
			$W'$&$W''$\\
			\hline
		\end{tabular}
	\end{center}

	Let $W'_l$ and $W''_l$ be the corresponding relaxed LPs of $W'$ and $W''$ respectively. The union of the solutions of $W'$ and $W''$ gives a solution for $W$ implying $OPT(W)\leq OPT(W') + OPT(W'')$. For each $x_v\in \textbf{W}_l$, define $y_v=\min\{1,2x_v\}$ and define $\textbf{Y}_l=\{y_v\}_{x_v\in \textbf{W}_l}$. Notice that $\textbf{Y}_l$  gives a solution to $W'_l$ (and $W''_l$). Hence, $OPT(W'_l)\leq 2\cdot OPT(W_l)$ and $OPT(W''_l)\leq 2\cdot OPT(W_l)$. Therefore, $OPT(W'_l) + OPT(W''_l)\leq 4\cdot OPT(W_l)$. Notice that, solving $W'$ (resp. $W''$) is equivalent to the problem of finding a minimum cardinality subset of the horizontal segments in $L$ (resp. $R$) to intersect all vertical segments in $A_1$ (resp. $A_2$). Now we have the following claim.
	
	\begin{claim}
		$OPT(W')\leq 2\cdot OPT(W'_l)$ and $OPT(W'')\leq 2\cdot OPT(W''_l)$.
	\end{claim}
	
	\medskip
	\noindent We shall prove the above claim only for $W'$ as proof for the other case is similar. Recall that solving $W'$ is equivalent to the problem of finding a minimum cardinality subset of the horizontal segments in the set $L$ (defined earlier) to intersect all vertical segments in $A_1$. For each $i\in\{1,2,\ldots,(t-1)\}$ let $T_{1,i}=A_1\cap T_i$ and $L_i$ be the set of horizontal segments in $L$ that intersect some vertical segment in $T_{1,i}$. Formally, $L_i=\bigcup\limits_{v\in T_{1,i}} S^{left}_v$. For any $i\neq j$, $T_{1,i}\cap T_{1,j}=\emptyset$ and $L_i\cap L_j=\emptyset$ (this follows from the fact no horizontal segment in $S$ intersects both $y=p_i$ and $y=p_j$). For each $i\in \{1,2,\ldots,(t-1)\}$, let $\mathcal{D}_i$ (resp, $\mathcal{D}_{i,l}$) denote the ILP (resp. relaxed LP) of the problem of selecting minimum subset $D_i$ horizontal segments in $L_i$ such that all vertical segments in $T_{1,i}$ intersect at least one horizontal segment in $D_i$. Clearly, $OPT(W')=\sum_{i=1}^{t-1} OPT(\mathcal{D}_i)$ and $OPT(W'_l)=\sum_{i=1}^{t-1} OPT(\mathcal{D}_{i,l})$. For each $i\in \{1,2,\ldots,(t-1)\}$ notice that, all horizontal segments intersect the vertical line $y=p_i$ and all vertical segments in $T_{1,i}$ lies to the left of the vertical line $y=p_i$. For each $i\in \{1,2,\ldots,(t-1)\}$ if we consider the segments in $L_i$ to be leftward-directed rays then solving $\mathcal{D}_i$ is equivalent to solving an \textsc{SSR} instance with $T_{1,i}$ and $L_i$ as input. Due to Corollary~\ref{cor:SSR-2-approx}, for each $i\in\{1,2,\ldots,(t-1)\}$, $OPT(\mathcal{D}_i) \leq 2\cdot OPT(\mathcal{D}_{i,l})$. Hence, $$OPT(W')=\displaystyle\sum\limits_{i=1}^{t-1} OPT(\mathcal{D}_i)\leq 2\cdot \displaystyle\sum\limits_{i=1}^{t-1} OPT(\mathcal{D}_{i,l})=2\cdot OPT(W'_l)$$ This completes the proof of the claim.
	
	\medskip
	\noindent
	Using the above claim and previous observations, we can infer that $$OPT(W) \leq OPT(W') + OPT(W'') \leq 2(OPT(W'_l)+OPT(W''_l)) \leq 8\cdot OPT(W'_l)$$ This completes the proof of the lemma.
\end{proof}

Therefore, we have a polynomial time $8$-approximation algorithm to solve the $\mathcal{P}(T,S)$ problem where $n=|S|+|T|$. We shall use the following corollary which follows from Lemma~\ref{lem:approx-POSC}.

\begin{corollary}\label{cor:hor-vertical-POSC}
	Let $X_1,X_2$ (resp. $Y_1,Y_2$) be sets of horizontal (resp. vertical) segments, $\mathcal{I}_X$ (resp. $\mathcal{I}_Y$) be the projections of the segments of $X_1\cup X_2$ (resp. $Y_1\cup Y_2$) onto the $x$-axis (resp. $y$-axis) and both $\mathcal{I}_X$ and $\mathcal{I}_Y$ are proper sets of intervals. Let $\mathcal{B}$ be the ILP of finding minimum cardinality subset $D$ of $X_2\cup Y_2$ such that every horizontal (resp. vertical) segment in $X_1$ (resp. $Y_1$) intersect at least one vertical (resp. horizontal) segment in $D\cap Y_2$ (resp. $D\cap X_2$). Then there is a polynomial time algorithm to compute a set $D'$ which gives a solution of $\mathcal{B}$ and $|D'|\leq 8\cdot OPT(\mathcal{B}_l)$ where $n=|X_1\cup X_2\cup Y_1\cup Y_2|$ and $\mathcal{B}_l$ is the relaxed LP of $\mathcal{B}$.
\end{corollary}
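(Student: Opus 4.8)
The plan is to observe that the ILP $\mathcal{B}$ splits into two completely independent parts and to handle each part with Lemma~\ref{lem:approx-POSC}. Since $D\cap X_2$ consists of horizontal segments and $D\cap Y_2$ of vertical segments, the constraint forcing a horizontal segment $x\in X_1$ to be hit involves only the variables of segments in $Y_2$, while the constraint forcing a vertical segment $y\in Y_1$ to be hit involves only the variables of segments in $X_2$. Thus $\mathcal{B}$ is the disjoint union of two ILPs: $\mathcal{B}_1$, the problem of choosing a minimum subset of $Y_2$ that stabs every segment of $X_1$, and $\mathcal{B}_2$, the problem of choosing a minimum subset of $X_2$ that stabs every segment of $Y_1$. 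Because these two ILPs use disjoint variable sets and no constraint mixes them, the constraint matrix of $\mathcal{B}$ is block-diagonal over the partition of the variables into $Y_2$-variables and $X_2$-variables, so the same decomposition holds for the relaxed LPs; in particular $OPT(\mathcal{B}_l)=OPT(\mathcal{B}_{1,l})+OPT(\mathcal{B}_{2,l})$, where $\mathcal{B}_{1,l}$ and $\mathcal{B}_{2,l}$ are the relaxed LPs of $\mathcal{B}_1$ and $\mathcal{B}_2$ (an optimal fractional solution of $\mathcal{B}_l$ restricts to feasible fractional solutions of $\mathcal{B}_{1,l}$ and $\mathcal{B}_{2,l}$, and conversely the union of optimal fractional solutions of $\mathcal{B}_{1,l}$ and $\mathcal{B}_{2,l}$ is feasible for $\mathcal{B}_l$).

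Next I would identify each part with a \textsc{POSS} instance. For $\mathcal{B}_2$ this is immediate: we are selecting horizontal segments from $X_2$ to hit the vertical segments of $Y_1$, and since $\mathcal{I}_X$ is a proper set of intervals so is the subset obtained by projecting only $X_2$ onto the $x$-axis; hence $\mathcal{B}_2$ is exactly the ILP of the instance $\mathcal{P}(Y_1,X_2)$. For $\mathcal{B}_1$ we are selecting vertical segments from $Y_2$ to hit the horizontal segments of $X_1$, which is the rotated version of a \textsc{POSS} instance: applying the isometry $(x,y)\mapsto(y,-x)$ turns $Y_2$ into a set of horizontal segments whose $x$-projections form exactly the (proper, being a subset of $\mathcal{I}_Y$) set obtained by projecting $Y_2$ onto the $y$-axis, and turns $X_1$ into a set of vertical segments; so $\mathcal{B}_1$ is equivalent to a \textsc{POSS} instance as well, and the ``proper'' hypothesis of Lemma~\ref{lem:approx-POSC} is met. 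Invoking Lemma~\ref{lem:approx-POSC} together with the constructive polynomial-time $8$-approximation algorithm following from its proof (solving the two LPs $W'_l,W''_l$ and running the \textsc{SSR}-algorithm on the resulting instances $\mathcal{D}_i$), I obtain in polynomial time a feasible integral solution $D_1'\subseteq Y_2$ of $\mathcal{B}_1$ with $|D_1'|\le 8\cdot OPT(\mathcal{B}_{1,l})$ and a feasible integral solution $D_2'\subseteq X_2$ of $\mathcal{B}_2$ with $|D_2'|\le 8\cdot OPT(\mathcal{B}_{2,l})$.

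Finally I would set $D'=D_1'\cup D_2'$. Every horizontal segment of $X_1$ meets a segment of $D_1'\subseteq D'\cap Y_2$ and every vertical segment of $Y_1$ meets a segment of $D_2'\subseteq D'\cap X_2$, so $D'$ is a feasible solution of $\mathcal{B}$; moreover $|D'|=|D_1'|+|D_2'|\le 8\bigl(OPT(\mathcal{B}_{1,l})+OPT(\mathcal{B}_{2,l})\bigr)=8\cdot OPT(\mathcal{B}_l)$, and every step (solving the \textsc{POSS} relaxed LPs and running the underlying \textsc{SSR}-algorithm) takes time polynomial in $n=|X_1\cup X_2\cup Y_1\cup Y_2|$. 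The only points that need care are verifying that the constraint matrix of $\mathcal{B}$ is genuinely block-diagonal over the variable partition $Y_2\,\cup\,X_2$ (so that the LP value decomposes exactly and the $8$-factor is measured against $OPT(\mathcal{B}_l)$ rather than a sum of unrelated quantities), and checking that the rotation applied to $\mathcal{B}_1$ produces an instance in which the \emph{selected} side is the one with proper projections, as required by Lemma~\ref{lem:approx-POSC}; neither is difficult, so I do not expect a serious obstacle here.
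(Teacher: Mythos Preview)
Your proposal is correct and is precisely the derivation the paper has in mind: it states only that the corollary ``follows from Lemma~\ref{lem:approx-POSC}'' without further detail, and your block-diagonal decomposition into two independent \textsc{POSS} instances (one after a $90^\circ$ rotation) together with the constructive algorithm inside the proof of Lemma~\ref{lem:approx-POSC} is the intended argument. The care you take in verifying that $OPT(\mathcal{B}_l)$ splits exactly as $OPT(\mathcal{B}_{1,l})+OPT(\mathcal{B}_{2,l})$ and that the proper-interval hypothesis lands on the selected side after rotation is appropriate and correct.
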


\subsubsection{Approximation algorithms for $\mathcal{PSD}(H,V)$}\label{subsubsec:Proper-Seg-Dom}

Now we are ready to describe our approximation algorithm to solve the $\mathcal{PSD}(H,V)$ problem. Recall that $V$ (resp. $H$) is a set of vertical (resp. horizontal) segments, $\mathcal{I}_V$ (resp. $\mathcal{I}_H$) is the projections of the segments in $V$ (resp. $H$) onto the $y$-axis (resp. $x$-axis) and both $\mathcal{I}_V$ and $\mathcal{I}_H$ are proper sets of intervals. The $\mathcal{PSD}(H,V)$ problem is to find a minimum cardinality subset of $H\cup V$ that intersects all vertical segments in $H\cup V$. For a segment $v\in H\cup V$, let $N(v)\subseteq H\cup V$ denote the set of segments that intersects $v$. For a segment $w\in H$, let $N_o(w)=N(w)\cap H$ and for a segment $w'\in V$ let $N_o(w')=N(w')\cap V$. Based on these we have the following ILP (say $Z$) for the $\mathcal{PSD}(H,V)$ problem.

\begin{center}
	\begin{tabular}{|c|}
		\hline
		\begin{minipage}{.65\textwidth}
			\begin{equation}
			\begin{array}{ll@{}ll}
			\text{minimize}  & \displaystyle\sum\limits_{w\in H\cup V} x_w &\\
			\text{subject to}& \displaystyle\sum\limits_{w\in N_o(u)} x_w + \displaystyle\sum\limits_{w\in N(u)\setminus N_o(u)} x_w \geq 1,  &\forall u\in H\cup V\\
			x_w\in \{0,1\}, & \forall w\in H\cup V
			\end{array}
			\end{equation}
		\end{minipage}\\
		$Z$\\
		\hline
	\end{tabular}
\end{center}

\sloppy The first step of our algorithm is to solve the relaxed LP formulation (say $Z_l$) of $Z$. Let $\textbf{Z}_l=\{x_w\colon w\in H\cup V\}$ be an optimal solution of $Z_l$. Let $$A_1=\left\{u\in H\cup V\colon \displaystyle\sum\limits_{w\in N_o(u)} x_w \geq \frac{1}{2} \right\}, A_2=\left\{u\in H\cup V\colon \displaystyle\sum\limits_{w\in N(u)\setminus N_o(u)} x_w \geq \frac{1}{2} \right\}$$, $$B_1=\bigcup\limits_{u\in A_1} N_o(u), \hspace{5pt} B_2=\bigcup\limits_{u\in A_2} N(u)\setminus N_o(u)$$ Based on these, we consider the following two integer programs $Z'$ and $Z''$.

\begin{center}
	\begin{tabular}{|c|c|}
		\hline
		\begin{minipage}{.45\textwidth}
			\begin{equation}
			\begin{array}{ll@{}ll}
			\text{minimize}  & \displaystyle\sum\limits_{w\in B_1} x'_w \\
			\text{subject to}& \displaystyle\sum\limits_{w\in N_o(v)} x'_w \geq 1,  &\forall v\in A_1\\                
			x'_w \in \{0,1\}, &w\in B_1
			\end{array}            
			\end{equation}
		\end{minipage}&
		\begin{minipage}{0.45\textwidth}
			\begin{equation}
			\begin{array}{ll@{}ll}
			\text{minimize}  & \displaystyle\sum\limits_{w\in B_2} x''_w \\
			\text{subject to}& \displaystyle\sum\limits_{w\in N(v)\setminus N_o(v)} x''_w \geq 1,  &\forall v\in A_2\\                
			x''_w \in \{0,1\}, &w\in B_2
			\end{array}
			\end{equation}
		\end{minipage}\\
		$Z'$&$Z''$\\
		\hline
	\end{tabular}
\end{center}

Let $Z'_l$ and $Z''_l$ be the corresponding relaxed LPs of $Z'$ and $Z''$ respectively. Clearly, the union of the solutions of $Z'$ and $Z''$ gives a solution for $Z$. Hence, $OPT(Z)\leq OPT(Z') + OPT(Z'')$. For each $x_v\in \textbf{Z}_l$, define $y_v=\min\{1,2x_v\}$ and define $\textbf{Y}_l=\{y_v\}_{x_v\in \textbf{Z}_l}$. Notice that $\textbf{Y}_l$ gives a solution for $Z'_l$ and $Z''_l$. Hence, $OPT(Z'_l) \leq 2\cdot OPT(Z_l)$ and $OPT(Z''_l)\leq 2\cdot OPT(Z_l)$. Now we prove the following lemma. 

\begin{lemma}\label{lem:PSD-1}
	$OPT(Z')=OPT(Z'_l)$ and $OPT(Z'')\leq 8\cdot OPT(Z''_l)$.
\end{lemma}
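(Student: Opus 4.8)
The plan is to handle the two programs $Z'$ and $Z''$ separately, and to reduce each to a result already established in the excerpt. Recall that $Z'$ asks for a minimum subset $D\subseteq B_1$ such that every $v\in A_1$ intersects a segment of $D\cap N_o(v)$; since $N_o(w)=N(w)\cap H$ for $w\in H$ and $N_o(w')=N(w')\cap V$ for $w'\in V$, membership in $N_o(\cdot)$ never mixes horizontal and vertical segments. Hence the constraints of $Z'$ split cleanly into a ``horizontal part'' (vertices $u\in A_1\cap H$, whose constraints involve only horizontal segments of $B_1$) and a ``vertical part'' (vertices $u\in A_1\cap V$, whose constraints involve only vertical segments of $B_1$), and the two parts share no variables. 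The first step is therefore to observe that $OPT(Z')$ and $OPT(Z'_l)$ each decompose as the sum over the horizontal and vertical parts. For the horizontal part, set $X=B_1\cap H$, $X'=A_1\cap H$; for the vertical part, set $Y=B_1\cap V$, $Y'=A_1\cap V$. Since $\mathcal{I}_H$ and $\mathcal{I}_V$ are proper, Corollary~\ref{cor:hor-vertical-SPID} applies directly (with $X,Y,X',Y'$ as above) and gives $OPT(\mathcal{A})=OPT(\mathcal{A}_l)$, where $\mathcal{A}$ is exactly the ILP $Z'$. Summing the two integrality statements yields $OPT(Z')=OPT(Z'_l)$.

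For the second part, $Z''$ asks for a minimum subset $D\subseteq B_2$ such that every $u\in A_2$ intersects a segment of $D\cap\bigl(N(u)\setminus N_o(u)\bigr)$; here $N(u)\setminus N_o(u)$ is the set of segments intersecting $u$ of the \emph{opposite} orientation. So the constraints split again: for $u\in A_2\cap H$ the relevant segments are vertical, and for $u\in A_2\cap V$ they are horizontal, and once more the two groups of variables are disjoint. I would split $OPT(Z'')=OPT(Z''_h)+OPT(Z''_v)$ and $OPT(Z''_l)=OPT(Z''_{h,l})+OPT(Z''_{v,l})$ accordingly. Each piece is now an instance of the ILP $\mathcal{B}$ from Corollary~\ref{cor:hor-vertical-POSC}: for the part governed by $u\in A_2\cap H$, take $X_1=A_2\cap H$, $Y_2=B_2\cap V$ (and the other two sets empty), which is legitimate since $\mathcal{I}_H,\mathcal{I}_V$ are proper; symmetrically for $u\in A_2\cap V$. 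Corollary~\ref{cor:hor-vertical-POSC} produces, in polynomial time, a feasible set of size at most $8\cdot OPT(\mathcal{B}_l)$, so in particular $OPT(\mathcal{B})\le 8\cdot OPT(\mathcal{B}_l)$ for each piece. Adding the two bounds gives $OPT(Z'')\le 8\cdot OPT(Z''_l)$.

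The only genuinely delicate point — and the step I expect to need the most care — is verifying that the constraint systems of $Z'$ and $Z''$ really do decouple into an orientation-homogeneous horizontal block and an orientation-homogeneous vertical block, and that the variable sets $B_1$ (resp.\ $B_2$) restricted to those blocks are exactly what Corollaries~\ref{cor:hor-vertical-SPID} and~\ref{cor:hor-vertical-POSC} expect as their ground sets $X,Y$ (resp.\ $X_2,Y_2$) and constrained sets $X',Y'$ (resp.\ $X_1,Y_1$). This is essentially bookkeeping about the definitions of $N_o$, $A_1$, $A_2$, $B_1$, $B_2$, together with the properness of $\mathcal{I}_H$ and $\mathcal{I}_V$ (which is inherited unchanged by any subset), but it must be stated precisely so that the two corollaries can be invoked verbatim. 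Once that alignment is made explicit, the lemma follows immediately by summing the per-block guarantees.
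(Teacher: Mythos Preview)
Your proposal is correct and follows essentially the same approach as the paper: for $Z'$ you invoke Corollary~\ref{cor:hor-vertical-SPID} with $X=B_1\cap H$, $X'=A_1\cap H$, $Y=B_1\cap V$, $Y'=A_1\cap V$, and for $Z''$ you invoke Corollary~\ref{cor:hor-vertical-POSC} with $X_1=A_2\cap H$, $Y_1=A_2\cap V$, $X_2=B_2\cap H$, $Y_2=B_2\cap V$. The only cosmetic difference is that the paper applies each corollary once to the combined horizontal--vertical problem (the corollaries are stated for $X\cup Y$ and $X_1,X_2,Y_1,Y_2$ simultaneously), whereas you split by orientation and sum; both are valid, and the bookkeeping you flag as the delicate point is exactly what the paper also relies on.
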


\begin{proof}
	
	To prove the first part, let $X$ (resp. $Y$) be the set of horizontal (resp. vertical) segments in $B_1$ and $X'$ (resp. $Y'$) be the set of horizontal (resp. vertical) segments in $A_1$. Notice that $X'\subseteq X$ and $Y'\subseteq Y$. Hence, $Z'$ is the ILP of finding minimum cardinality subset $D$ of $X\cup Y$ such that every horizontal (resp. vertical) segment in $X'$ (resp. $Y'$) intersects at least one horizontal (resp. vertical) segment in $D\cap X$ (resp. $D\cap Y$). By Corollary~\ref{cor:hor-vertical-SPID}, we have that $OPT(Z')=OPT(Z'_l)$.
	
	To prove the second part, let $X_1$ and $X_2$ (resp. $Y_1$ and $Y_2$) be the sets of horizontal (resp. vertical) segments in $A_2$ and $B_2$, respectively. Notice that $Z''$ is the ILP of finding minimum cardinality subset $D$ of $X_2\cup Y_2$ such that every horizontal (resp. vertical) segment in $X_1$ (resp. $Y_1$) intersects at least one vertical (resp. horizontal) segment in $D\cap Y_2$ (resp. $D\cap X_2$). By Corollary~\ref{cor:hor-vertical-POSC}, we have that $OPT(Z'')\leq 8\cdot OPT(Z''_l)$.
\end{proof}

Using Lemma~\ref{lem:PSD-1}, we can conclude that $$OPT(Z) \leq OPT(Z') + OPT(Z'') \leq OPT(Z'_l) + 8\cdot OPT(Z''_l) \leq 18\cdot OPT(Z'_l)$$ 

Due to Lemma~\ref{lem:PSD-1}, Corollary~\ref{cor:hor-vertical-SPID} and~\ref{cor:hor-vertical-POSC} we have the following theorem.

\begin{theorem}\label{thm:Proper-Seg-Dom-2}
	There is a polynomial time $18$-approximation algorithm to solve the \textsc{Proper-Seg-Dom} problem where $n$ is the total number of segments in the input instance. 
\end{theorem}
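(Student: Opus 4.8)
The plan is to use the LP-relaxation-and-split paradigm that the section has already been building towards. First I would write down the natural covering ILP $Z$ for $\mathcal{PSD}(H,V)$: one $0/1$ variable $x_w$ for each segment $w\in H\cup V$, and for each segment $u$ a constraint demanding that $u$ be hit either by a parallel segment (a member of $N_o(u)$) or by a perpendicular segment (a member of $N(u)\setminus N_o(u)$). Then I would solve its LP relaxation $Z_l$ optimally; since the constraint matrix is $(0,1)$-valued this takes polynomial time. Let $\textbf{Z}_l=\{x_w\}$ be the optimum.

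Next, for every segment $u$ at least one of the two partial sums $\sum_{w\in N_o(u)}x_w$ and $\sum_{w\in N(u)\setminus N_o(u)}x_w$ is at least $\tfrac12$; this classifies the segments into $A_1$ (``parallel-dominated'') and $A_2$ (``perpendicular-dominated''), and restricts attention to the candidate pools $B_1=\bigcup_{u\in A_1}N_o(u)$ and $B_2=\bigcup_{u\in A_2}\big(N(u)\setminus N_o(u)\big)$. This produces the two independent subproblems $Z'$ (cover every $u\in A_1$ by a parallel segment from $B_1$) and $Z''$ (cover every $u\in A_2$ by a perpendicular segment from $B_2$); the union of any feasible solutions of $Z'$ and $Z''$ is feasible for $Z$, so $OPT(Z)\le OPT(Z')+OPT(Z'')$. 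Setting $y_w=\min\{1,2x_w\}$ gives a fractional solution feasible for both $Z'_l$ and $Z''_l$ of cost at most $2\,OPT(Z_l)$, hence $OPT(Z'_l),\,OPT(Z''_l)\le 2\,OPT(Z_l)$.

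The core of the argument is then to pass from fractional to integral on each subproblem while staying in polynomial time, which is exactly Lemma~\ref{lem:PSD-1}. For $Z'$, the constraint matrix has the consecutive-ones (interval) structure because the relevant projections are proper interval families, so by Corollary~\ref{cor:hor-vertical-SPID} the LP is integral, $OPT(Z')=OPT(Z'_l)$, and an optimal integral solution can be found in $O(n\log n)$ time. For $Z''$, which is of the ``cover perpendicular segments by orthogonal segments, with proper projections'' type, Corollary~\ref{cor:hor-vertical-POSC} produces in polynomial time an integral solution $D'$ with $|D'|\le 8\,OPT(Z''_l)$. Taking $A$ to be the union of the optimal solution of $Z'$ with $D'$, this $A$ intersects every segment of $H\cup V$ and $|A|\le OPT(Z'_l)+8\,OPT(Z''_l)\le 2\,OPT(Z_l)+16\,OPT(Z_l)=18\,OPT(Z_l)\le 18\,OPT(Z)$, giving the claimed $18$-approximation; the running time is the single LP solve plus the $O(n\log n)$ subroutines, hence polynomial.

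I expect the main obstacle to be not this final bookkeeping but the structural verification that underlies the two corollaries: namely that the interval families arising as subsets of $\mathcal{I}_H$ and $\mathcal{I}_V$ remain proper, so that Corollary~\ref{cor:hor-vertical-SPID} (total unimodularity) and Corollary~\ref{cor:hor-vertical-POSC} apply verbatim, and — one level deeper — the $8$-approximation for the POSS subproblem (Lemma~\ref{lem:approx-POSC}), whose strip decomposition around a piercing set reduces each strip to an \textsc{SSR} instance and then invokes Corollary~\ref{cor:SSR-2-approx}. Once those facts are granted, Theorem~\ref{thm:Proper-Seg-Dom-2} follows by simply chaining the inequalities above.
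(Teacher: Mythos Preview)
Your proposal is correct and follows essentially the same approach as the paper: solve the LP relaxation $Z_l$, split each segment into the ``parallel'' class $A_1$ or the ``perpendicular'' class $A_2$ according to which partial sum is at least $\tfrac12$, then invoke Corollary~\ref{cor:hor-vertical-SPID} on $Z'$ (integrality gap $1$) and Corollary~\ref{cor:hor-vertical-POSC} on $Z''$ (integrality gap $\le 8$) to obtain $|A|\le OPT(Z'_l)+8\,OPT(Z''_l)\le 18\,OPT(Z_l)$. Your arithmetic and the chain of inequalities match the paper's exactly, and your remark about verifying that the projected interval families remain proper (which is immediate since subsets of proper families are proper) is the only point the paper leaves implicit.
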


We shall use the following corollary in Section~\ref{sec:unit-b-k-VPG-approx-complete}, whose proof follows from that of Theorem~\ref{thm:Proper-Seg-Dom-2}.

\begin{corollary}\label{cor:proper-seg-dom}
	Let $V_1,V_2$ (resp. $H_1,H_2$) be a set of vertical (resp. horizontal) segments, $\mathcal{I}_V$ (resp. $\mathcal{I}_H$) be the projections of the segments of $V_1\cup V_2$ (resp. $H_1\cup H_2$) onto the $y$-axis (resp. $x$-axis) and both $\mathcal{I}_V$ and $\mathcal{I}_H$ are proper sets of intervals. Let $\mathcal{C}$ be the ILP of the problem of finding a minimum cardinality subset $D$ of $V_2\cup H_2$ such that every segment in $V_1\cup H_1$ intersects some segment in $D$. Then there is a polynomial time algorithm to compute a set $D'$ which gives a solution of $\mathcal{C}$ and $|D'|\leq 18\cdot OPT(\mathcal{C}_l)$ where $n=|V_1\cup V_2\cup H_1\cup H_2|$ and $\mathcal{C}_l$ is the relaxed LP of $\mathcal{C}$.
\end{corollary}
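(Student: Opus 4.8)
The plan is to replay, almost verbatim, the analysis behind Theorem~\ref{thm:Proper-Seg-Dom-2}, now treating $V_1\cup H_1$ as the family of segments that must be dominated and $V_2\cup H_2$ as the pool of candidate dominators. First I would write the natural ILP $\mathcal{C}$: a $0/1$ variable $x_w$ for every $w\in V_2\cup H_2$, one constraint $\sum_{w\in N(u)\cap (V_2\cup H_2)} x_w\ge 1$ for every $u\in V_1\cup H_1$, minimizing $\sum_w x_w$ (feasibility is implicit in the statement, since otherwise no $D$ exists). Solve its relaxation $\mathcal{C}_l$ in polynomial time and let $\{x_w\}$ be a fractional optimum.

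Next I would split each constraint according to whether the \emph{parallel} neighbours of $u$ (segments of $V_2\cup H_2$ of the same orientation as $u$) or the \emph{perpendicular} ones carry at least half of the fractional mass. Setting $A_1=\{u\in V_1\cup H_1 : \sum_{w\ \text{parallel to}\ u} x_w\ge \tfrac12\}$ and $A_2=\{u\in V_1\cup H_1 : \sum_{w\ \text{perpendicular to}\ u} x_w\ge \tfrac12\}$ one gets $A_1\cup A_2=V_1\cup H_1$; let $B_1$ (resp.\ $B_2$) be the union over $u\in A_1$ (resp.\ $u\in A_2$) of the parallel (resp.\ perpendicular) neighbours of $u$. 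This produces two integer programs $\mathcal{C}'$ (dominate every $u\in A_1$ by a same-orientation segment of $B_1$) and $\mathcal{C}''$ (dominate every $u\in A_2$ by a perpendicular segment of $B_2$); the union of feasible solutions of $\mathcal{C}'$ and $\mathcal{C}''$ is feasible for $\mathcal{C}$, so $OPT(\mathcal{C})\le OPT(\mathcal{C}')+OPT(\mathcal{C}'')$. Rounding coordinatewise via $y_w=\min\{1,2x_w\}$ gives feasible fractional solutions of $\mathcal{C}'_l$ and $\mathcal{C}''_l$, whence $OPT(\mathcal{C}'_l)\le 2\,OPT(\mathcal{C}_l)$ and $OPT(\mathcal{C}''_l)\le 2\,OPT(\mathcal{C}_l)$.

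I would then invoke the two corollaries. The constraint matrix of $\mathcal{C}'$ splits into a horizontal block (segments of $H_1$ against segments of $H_2$) and a vertical block (segments of $V_1$ against $V_2$), and each block has the consecutive-ones property by the argument of Lemma~\ref{lem:PID-integrality-gap}, which needs only that the horizontal segments involved project onto a proper family (this is the hypothesis that $\mathcal{I}_H$ is proper, and similarly $\mathcal{I}_V$); so by Corollary~\ref{cor:hor-vertical-SPID} an \emph{exact} optimum $D_1$ of $\mathcal{C}'$ can be computed in $O(n\log n)$ time with $|D_1|=OPT(\mathcal{C}'_l)$. For $\mathcal{C}''$, Corollary~\ref{cor:hor-vertical-POSC} yields in polynomial time a feasible $D_2$ with $|D_2|\le 8\,OPT(\mathcal{C}''_l)$. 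Then $D':=D_1\cup D_2$ is feasible for $\mathcal{C}$ and $|D'|\le OPT(\mathcal{C}'_l)+8\,OPT(\mathcal{C}''_l)\le 2\,OPT(\mathcal{C}_l)+16\,OPT(\mathcal{C}_l)=18\,OPT(\mathcal{C}_l)$; the whole computation (one LP solve of size $n$, then the two $O(n\log n)$ subroutines) is polynomial.

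The step that needs care — and the only genuine difference from Theorem~\ref{thm:Proper-Seg-Dom-2} — is that there every segment was its own neighbour, which forced $A_i\subseteq B_i$ and let one quote Corollaries~\ref{cor:hor-vertical-SPID} and~\ref{cor:hor-vertical-POSC} directly, whereas here $V_1\cup H_1$ and $V_2\cup H_2$ are a priori unrelated. One must therefore check that those corollaries (or, more precisely, the total-unimodularity argument of Lemma~\ref{lem:PID-integrality-gap} and the strip/point-stabbing decomposition behind Lemma~\ref{lem:approx-POSC}) still apply without any containment between the dominated set and the dominating set. They do, because both arguments only use properness of $\mathcal{I}_H$ and $\mathcal{I}_V$, which here cover $H_1\cup H_2$ and $V_1\cup V_2$ respectively; I expect this to be the main (and essentially the only) obstacle to writing the proof cleanly.
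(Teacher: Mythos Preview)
Your proposal is correct and is precisely the approach the paper intends: the paper gives no separate proof for this corollary, merely stating that it ``follows from that of Theorem~\ref{thm:Proper-Seg-Dom-2}'', and you have spelled out exactly that replay, including the one point that needs checking (that Corollaries~\ref{cor:hor-vertical-SPID} and~\ref{cor:hor-vertical-POSC} still apply when the dominated and dominating sets are distinct, which indeed only uses the properness of $\mathcal{I}_H$ and $\mathcal{I}_V$). The only quibble is that solving $\mathcal{C}''$ via Corollary~\ref{cor:hor-vertical-POSC} is not an $O(n\log n)$ subroutine but itself requires an LP solve; this does not affect the polynomial-time claim.
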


\subsection{Completion of Proof for Theorem~\ref{thm:unit-bk-VPG}}\label{sec:unit-b-k-VPG-approx-complete}

Let $\mathcal{R}$ be a unit $B_k$-VPG representation of a unit $B_k$-VPG graph $G$. We shall assume that every vertex of $G$ has a self loop (this does not contradict the intersection model as every rectilinear path intersects itself). For a vertex $v\in V(G)$, let $P(v)$ denote the path in $\mathcal{R}$ that corresponds to $v$. For a vertex $v\in V(G)$, $N(v)$ and $N[v]$ denote the \emph{open neighbourhood} and \emph{closed neighbourhood} of $v$, respectively. Throughout this section, we shall assume that the segments of each path $P\in \mathcal{R}$ are numbered consecutively starting from the leftmost segment by $1,2,\ldots,t$ where $t (\leq k+1)$ is the number of segments in $P$. 

\begin{definition}
	Let $\phi\colon E(G)\rightarrow \mathbb{N}\times \mathbb{N}$ be a mapping such that for an edge $uv$, $\phi(uv)=(i,j)$ if and only if
	\begin{enumerate}
		\item the $i^{th}$ segment of $P(u)$ intersects the $j^{th}$ segment of $P(v)$, and
		
		\item for all $1\leq a<i$ and $1\leq b<j$, the $a^{th}$ segment of $P(u)$ and $b^{th}$ segment of $P(v)$ does not intersect each other.
	\end{enumerate}
\end{definition}

Notice that for each vertex $u\in V(G)$, $\phi(u,u)=(1,1)$. Based on the definition we can partition the closed neighbourhood of a vertex as follows. For a vertex $u\in V(G)$, let $X_u(i,j)=\{v\in N(u) \colon \phi(uv)=(i,j)\}$. For distinct pairs $(i,j)$ and $(i',j')$ the sets $X_u(i,j)$ and $X_u(i',j')$ are disjoint. By $\mathcal{K}$ we shall denote the set $\{1,2,\ldots,k+1\}\times \{1,2,\ldots,k+1\}$. Based on these we have the following ILP of the \textsc{Unit-$B_k$-VPG-Dom} problem on $G$.

\begin{center}
	\begin{tabular}{|c|}
		\hline
		\begin{minipage}{.65\textwidth}
			\begin{equation}
			\begin{array}{ll@{}ll}
			\text{minimize}  & \displaystyle\sum\limits_{v\in V(G)} x_v &\\
			&\\
			\text{subject to}& \displaystyle\sum\limits_{(i,j)\in \mathcal{K}}\hspace{5pt}\sum\limits_{v\in X_u(i,j)} x_v \geq 1,  & \hspace{5pt}\forall u\in V(G)\\
			&\\
			x_v\in \{0,1\}, & \forall v\in V(G)
			\end{array}
			\end{equation}
		\end{minipage}\\
		$Q$\\
		\hline
	\end{tabular}
\end{center}

First step of our algorithm is to solve the relaxed LP formulation (say $Q_l$) of $Q$. Let $\mathbf{Q}_l=\{x_v\colon v\in V(G) \}$ be an optimal solution of $Q_l$. For each vertex $u\in V(G)$, there is a pair $(i,j)\in \mathcal{K}$ such that $\sum\limits_{v\in X_u(i,j)} x_v \geq \frac{1}{(k+1)^2}$. For each pair $(i,j)\in \mathcal{K}$, define $$A(i,j)=\left\{u\in V(G)\colon \sum\limits_{v\in X_u(i,j)} x_v \geq \frac{1}{(k+1)^2}\right\}$$

$$B(i,j)=\displaystyle\bigcup\limits_{u\in A(i,j)} X_{u}(i,j)$$

Based on these we have the following ILP for each pair $(i,j)\in \mathcal{K}$.

\begin{center}
	\begin{tabular}{|c|}
		\hline
		\begin{minipage}{.65\textwidth}
			\begin{equation}
			\begin{array}{ll@{}ll}
			\text{minimize}  & \displaystyle\sum\limits_{v\in B(i,j)} x'_v &\\
			&\\
			\text{subject to}& \displaystyle\sum\limits_{v\in X_u(i,j)} x'_v\geq 1,  & \hspace{5pt}\forall u\in A(i,j)\\
			&\\
			x'_v\in \{0,1\}, & \forall v\in B(i,j)
			\end{array}
			\end{equation}
		\end{minipage}\\
		\\$Q(i,j)$\\
		\hline
	\end{tabular}
\end{center}

For each pair pair $(i,j)\in \mathcal{K}$, let $Q_l(i,j)$ be the relaxed LP of $Q(i,j)$. We have the following $$OPT(Q)\leq \displaystyle\sum\limits_{(i,j)\in \mathcal{K}} OPT(Q(i,j))$$ 

For each $x_v\in \textbf{Q}_l$, define $y_v=\min\{1,2x_v\}$ and define $\textbf{Y}_l=\{y_v\}_{x_v\in \textbf{Q}_l}$. Clearly, $\textbf{Y}_l$ gives a solution to $Q_l(i,j)$ for each $(i,j)\in \mathcal{K}$. Hence, we have the following inequality. $$\displaystyle\sum\limits_{(i,j)\in \mathcal{K}} OPT(Q_l(i,j)) \leq (k+1)^4\cdot OPT(Q_l)$$

Now we have the following lemma.

\begin{lemma}\label{lem:unit-B-VPG}
	For each pair $(i,j)\in \mathcal{K}$, there is a solution $D(i,j)$ for $Q(i,j)$ such that $|D(i,j)|\leq 18\cdot OPT(Q_l(i,j))$.
\end{lemma}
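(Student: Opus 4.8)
The plan is to reduce, for each fixed pair $(i,j)\in\mathcal{K}$, the integer program $Q(i,j)$ to a single instance of the \textsc{Proper-Seg-Dom} problem (more precisely, to the situation described in Corollary~\ref{cor:proper-seg-dom}), and then invoke that corollary together with the already-established relation $OPT(Q_l(i,j))$ between the fractional cost of $Q(i,j)$ and its relaxation. First I would unwind the meaning of $Q(i,j)$: it asks for a minimum-cardinality set $D\subseteq B(i,j)$ hitting, for every $u\in A(i,j)$, the set $X_u(i,j)$ of vertices $v$ whose path $P(v)$ meets $P(u)$ with the first meeting occurring on the $i$-th segment of $P(u)$ and the $j$-th segment of $P(v)$. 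The key geometric observation is that membership in $X_u(i,j)$ is witnessed entirely by the intersection of two \emph{individual} unit segments: the $i$-th segment $s_i(u)$ of $P(u)$ and the $j$-th segment $s_j(v)$ of $P(v)$. Hence I would replace each vertex $u$ by the single segment $s_i(u)$ when it plays the role of a "demand" (i.e. $u\in A(i,j)$) and by the single segment $s_j(v)$ when it plays the role of a "supply" (i.e. $v\in B(i,j)$), obtaining four segment sets: the horizontal ones and the vertical ones among the demand segments, and the horizontal ones and the vertical ones among the supply segments. These are exactly the sets $H_1,V_1$ and $H_2,V_2$ of Corollary~\ref{cor:proper-seg-dom}.

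The step I expect to be the crux is verifying the \emph{properness} hypothesis: that the $x$-projections of all horizontal segments involved form a proper family of intervals, and likewise the $y$-projections of all vertical segments. This is where the "unit" hypothesis is essential — all segments of every path in $\mathcal{R}$ have unit length, so the $x$-projection of any horizontal unit segment is an interval of length exactly $1$, and no length-$1$ interval can strictly contain another length-$1$ interval. The same holds for vertical segments and their $y$-projections. So properness is immediate, but I would state it carefully, since it is the only place the unit assumption enters, and it is what lets us apply the \textsc{Proper-Seg-Dom} machinery rather than the general (and presumably harder) orthogonal-segment domination problem. I would also need to check the matching between intersections: two unit segments, one horizontal with $x$-projection $I$ and $y$-level $\eta$, one vertical with $y$-projection $J$ and $x$-level $\xi$, intersect iff $\xi\in I$ and $\eta\in J$, which is precisely the incidence structure used in Corollary~\ref{cor:proper-seg-dom}; and two horizontal (resp. vertical) unit segments intersect iff their $y$-levels (resp. $x$-levels) coincide and their $x$-projections (resp. $y$-projections) overlap — consistent with the $N_o(\cdot)$ relation used inside the corollary's proof.

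With the reduction in place, let $\mathcal{C}$ be the ILP produced by Corollary~\ref{cor:proper-seg-dom} for these four segment sets. By construction a subset $D$ of $B(i,j)$ is feasible for $Q(i,j)$ exactly when the corresponding set of supply segments is feasible for $\mathcal{C}$, and this correspondence is cardinality-preserving, so $OPT(Q(i,j))=OPT(\mathcal{C})$ and, since the constraint matrices literally coincide up to relabelling of rows and columns, also $OPT(Q_l(i,j))=OPT(\mathcal{C}_l)$. Corollary~\ref{cor:proper-seg-dom} then yields, in polynomial time, a set $D'$ that is feasible for $\mathcal{C}$ with $|D'|\le 18\cdot OPT(\mathcal{C}_l)$; pulling $D'$ back through the correspondence gives the desired $D(i,j)\subseteq B(i,j)$ feasible for $Q(i,j)$ with
\[
|D(i,j)| \;=\; |D'| \;\le\; 18\cdot OPT(\mathcal{C}_l) \;=\; 18\cdot OPT(Q_l(i,j)),
\]
which is the statement of the lemma. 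The remaining bookkeeping — that the reduction is computable in polynomial time and that the paths have at most $k+1$ segments so each $s_i(\cdot)$, $s_j(\cdot)$ is well defined (treating a path with fewer segments as not contributing to $A(i,j)$ or $B(i,j)$ for out-of-range indices) — is routine. I do not anticipate difficulty beyond writing the properness argument cleanly.
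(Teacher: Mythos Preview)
Your proposal is correct and takes essentially the same route as the paper: extract the $i$-th segment of each $P(u)$ with $u\in A(i,j)$ and the $j$-th segment of each $P(v)$ with $v\in B(i,j)$, split into horizontal and vertical parts, use the unit-length hypothesis to obtain properness of the projections, and then apply Corollary~\ref{cor:proper-seg-dom}. The only cosmetic difference is that the paper phrases the properness step as a small perturbation of the unit segments (to rule out coinciding projections) rather than your direct appeal to the fact that no length-$1$ interval can strictly contain another.
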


\begin{proof}
	Fix an arbitrary but fixed pair $(i,j)\in \mathcal{K}$. Solving $Q(i,j)$ is equivalent to finding a minimum cardinality subset $D$ of $B(i,j)$ such that each vertex $u\in A(i,j)$ has a neighbour in $D\cap X_u(i,j)$. Since for each vertex $u\in A(i,j)$ and each $v\in X_u(i,j)$, the $i^{th}$ segment of $P(u)$ intersects the $j^{th}$ segment of $P(v)$. Now we define some sets as follows.
	$$S=\{i^{th}\text{ segment of } P(u)\colon u\in A(i,j)\}, \hspace{5pt}T=\{j^{th}\text{ segment of } P(v)\colon v\in B(i,j)\}$$
	$$V_1=\text{set of vertical segments in }S , \hspace{5pt}V_2=\text{set of vertical segments in }T $$
	$$H_1=\text{set of horizontal segments in }S ,\hspace{5pt} H_2=\text{set of horizontal segments in }T $$
	
	Notice that, solving $Q(i,j)$ is equivalent to the problem finding a minimum cardinality subset $D$ of $V_2\cup H_2$ such that every segment in $V_1\cup H_1$ intersect at least one segment in $D$. Since every segment in $V_1\cup V_2\cup H_1\cup H_2$ have unit length, we can modify the vertical (resp. horizontal) segments in $V_1,V_2$ (resp. $H_1, H_2$) and get $V'_1,V'_2$ (resp. $H'_1\cup H'_2$) respectively such that 
	
	\begin{enumerate}
		\item the set of intervals obtained by projecting the segments in $V'_1\cup V'_2$ (resp. $H'_1\cup H'_2$) onto the $y$-axis (resp. $x$-axis) is a proper set of intervals, and
		
		\item two segments $u',v'\in V'_1\cup V'_2\cup H'_1\cup H'_2$ intersect if and only if the corresponding segments $u,v\in V_1\cup V_2\cup H_1\cup H_2$ intersect.
	\end{enumerate}
	
	Hence, solving $Q(i,j)$ is equivalent to the problem of finding a minimum cardinality subset $D$ of $V'_2\cup H'_2$ such that every segment in $V'_1\cup H'_1$ intersect at least one segment in $D$. Moreover, the sets of intervals obtained by projecting the segments in $V'_1\cup V'_2$ and $H'_1\cup H'_2$ onto the $y$-axis and $x$-axis, respectively are proper sets of intervals. Hence by Corollary~\ref{cor:proper-seg-dom}, we can find a solution (say $D(i,j)$) for $Q(i,j)$ such that $|D(i,j)|\leq 18\cdot OPT(Q_l(i,j))$. This completes the proof.
\end{proof}

For each pair $(i,j)\in \mathcal{K}$, due to Lemma~\ref{lem:unit-B-VPG}, we can get a solution $D(i,j)$ of $Q(i,j)$ such that $|D(i,j)|\leq 18\cdot OPT(Q(i,j))$ in polynomial time. Let $D$ be the union of $D(i,j)$'s for all $(i,j)\in \mathcal{K}$. We have that $$|D|=\displaystyle\sum\limits_{(i,j)\in \mathcal{K}} |D(i,j)|\leq \displaystyle\sum\limits_{(i,j)\in \mathcal{K}} 18\cdot OPT(Q_l(i,j))\leq 18\cdot (k+1)^4\cdot OPT(Q_l)\leq 18\cdot (k+1)^4\cdot OPT(Q) $$

This completes the Proof of Theorem~\ref{thm:unit-bk-VPG}.

			\section{Conclusion}\label{sec:conclude}
			
			In this paper, we gave approximation algorithms for \textsc{MDS} problems on vertically-stabbed-\textsc{L} graphs and unit-$B_k$-VPG graphs. We gave a polynomial time $O(k^4)$-approximation algorithm to solve the \textsc{MDS} problem on unit-$B_k$-VPG graphs. However, the status of \textsc{MDS} problems on $B_k$-VPG graphs remains unknown for a fixed $k$. Approximation algorithms with better running time or better approximation factor for \textsc{MDS} problems on vertically-stabbed-\textsc{L} graphs and unit-$B_k$-VPG graphs would be interesting.

\bibliography{reference}

\end{document}